\documentclass[journal]{IEEEtran}

\pdfminorversion=4

\usepackage{cite}
\usepackage{empheq}
\usepackage{amsmath}
\usepackage{amsthm}
\usepackage{amssymb}
\usepackage{mathrsfs}
\usepackage{ulem}
\usepackage{enumerate}
\usepackage{graphicx}
\graphicspath{{Figs/}{../pdf/}{../jpeg/}}
\DeclareGraphicsExtensions{.pdf,.jpeg,.png}

\usepackage{cancel}
\usepackage{makecell}
\usepackage{bbding}
\usepackage{booktabs}
\usepackage{array}

\newtheorem{theorem}{Theorem}[section]
\newtheorem{lemma}[theorem]{Lemma}

\newtheorem{corollary}[theorem]{Corollary}
\newtheorem{proposition}[theorem]{Proposition}

\newtheorem{example}{Example}

\usepackage{xcolor}


\newcommand{\tr}[0]{\color{black}}



\begin{document}

\title{Gain and Phase: Decentralized Stability Conditions for Power Electronics-Dominated Power Systems}
%
\author{\vspace{3mm} Linbin Huang, Dan Wang, Xiongfei Wang, Huanhai Xin, Ping Ju, Karl H. Johansson, and Florian D{\"o}rfler
\vspace{-3mm}
\thanks{Linbin Huang and Dan Wang contributed equally to this work.}
\thanks{Linbin Huang and Florian D{\"o}rfler are with the Department of Information Technology and Electrical Engineering at ETH Z{\"u}rich, Switzerland. (Emails: \text\{linhuang, dorfler\}@ethz.ch)}
\thanks{Dan Wang, Xiongfei Wang, and Karl H. Johansson are with the School of Electrical Engineering and Computer Science, KTH Royal Institute of Technology, Sweden. (Emails: \text\{danwang, xiongfei, kallej\}@kth.se)}
\thanks{Huanhai Xin and Ping Ju are with the College of Electrical Engineering, Zhejiang University, China. (Emails: \text\{xinhh, pju\}@zju.edu.cn)}
\thanks{This research was supported by the National  Natural Science Foundation of China (U2066601, U22B6008, U2166204), SNSF under NCCR Automation, Swedish Research Council Distinguished Professor Grant 2017-01078,
Knut and Alice Wallenberg Foundation Wallenberg Scholar Grant.}

}
\vspace{-5mm}
	
	\maketitle
	
	\begin{abstract}
	This paper proposes decentralized stability conditions for multi-converter systems based on the combination of the small gain theorem and the small phase theorem. Instead of directly computing the closed-loop dynamics, e.g., eigenvalues of the state-space matrix, or using the generalized Nyquist stability criterion, the proposed stability conditions are more scalable and computationally lighter, which aim at evaluating the closed-loop system stability by comparing the individual converter dynamics with the network dynamics in a decentralized and open-loop manner. Moreover, our approach can handle heterogeneous converters' dynamics and is suitable to analyze large-scale multi-converter power systems that contain grid-following (GFL), grid-forming (GFM) converters, {\tr and synchronous generators}. Compared with other decentralized stability conditions, e.g., passivity-based stability conditions, the proposed conditions are significantly less conservative and can be generally satisfied in practice across the whole frequency range.

	\end{abstract}
	
	\begin{IEEEkeywords}
	Decentralized stability conditions, grid-forming control, grid-following control, power converters, power systems, small gain theorem, small phase theorem, small signal stability.
	\end{IEEEkeywords}
	
	\section{Introduction}\label{sec:intro}
	
	Power electronics converters play a significant role in modern power systems, acting as the interfaces between the power grid and renewable energy sources, high-voltage DC transmission systems, smart loads, energy storage systems, etc. The large-scale integration of power converters is changing the power system dynamics, as they have distinct dynamics compared with conventional synchronous generators~\cite{milano2018foundations}. Under such a background, new stability problems are emerging, and analyzing the stability of systems integrated with multiple power converters is essential for ensuring the secure operation of power systems~\cite{hatziargyriou2020definition}. In this paper, we focus on the small-signal stability of multi-converter systems. The small-signal stability analysis of power converters has been an important and popular topic for many years, due to the complicated dynamics in converters caused by the interaction among filters, multiple nested control loops, and the power grid. There have been many well-known methods to evaluate the stability of power converters, such as eigenvalue analysis~\cite{pogaku2007modeling}, impedance-based analysis~\cite{sun2011impedance, wang2017unified, wen2015analysis, rygg2016modified}, small gain theorem-based analysis~\cite{huang2020h}, and passivity-based analysis~\cite{harnefors2015passivity, perez2004passivity}.

    Eigenvalue analysis is based on deriving the state-space matrix of the system, which, in the context of multi-converter systems, requires a detailed, global, and closed-loop model of the whole system. Hence, it may suffer from scalability and dimensionality problems when dealing with large-scale systems. Compared with eigenvalue analysis, impedance-based analysis offers more insights into the system dynamics in a wide frequency range. Moreover, the impedance of the power grid and converters can be measured, so black-box models can be directly used for stability assessment~\cite{haberle2023mimo}. In multi-converter systems, one may need to build the impedance network for stability analysis~\cite{liu2018oscillatory}. Nonetheless, the stability analysis relies on using the generalized Nyquist stability criterion or deriving the characteristic polynomial of the closed-loop system, which may still suffer from scalability and dimensionality problems. As a remedy, if all the converters in the system have homogeneous dynamics, one can mathematically decouple the system into small-scale subsystems, and then use state-space or impedance methods to analyze the subsystems~\cite{dong2018small, motter2013spontaneous, huang2022impacts}. For instance, Ref.~\cite{dong2018small} decouples a multi-infeed system that contains homogeneous grid-following (GFL) converters and analyzes the stability from the perspective of grid strength characterized by the generalized short-circuit ratio (gSCR).

    However, it has been widely acknowledged that GFL converters, which rely on phase-locked loops (PLLs) for grid synchronization, cannot support a power electronics-dominated power system. This is because PLL aims at tracking the grid frequency, and there must be frequency sources in the system such that GFL converters can operate in a stable way. Hence, we need the so-called grid-forming (GFM) converters. Typical GFM control methods include droop control~\cite{rocabert2012control, guerrero2010hierarchical}, virtual synchronous machines~\cite{d2013virtual}, synchronverters~\cite{zhong2010synchronverters}, virtual oscillator control~\cite{johnson2013synchronization, colombino2019global}, and so on. The coexistence of GFM and GFL converters makes the stability analysis of multi-converter systems more complicated, and currently it is not clear how to evaluate the stability of large-scale multi-converter systems in a scalable and computationally feasible fashion. 
    Passivity-based analysis can potentially be used to analyze the stability of GFM-GFL hybrid multi-converter systems in a scalable and decentralized manner, i.e., if all the converters are passive, then the interconnected multi-converter system is stable~\cite{harnefors2015passivity}, but it may lead to overly conservative results. Moreover, the converter's dynamics in the low-frequency range may not satisfy the passivity condition when the synchronization dynamics are taken into account due to, for instance, the negative resistance effect of PLL~\cite{wen2015analysis,harnefors2015passivity}. {\tr In~\cite{vorobev2019decentralized}, decentralized stability certificates were developed based on the celebrated concept of dissipativity. Moreover, it was pointed out that one can use multipliers to further reduce the conservatism of the condition, which provides more flexibility. Based on the concept of ``zero exclusion principle'', a decentralized stability certificate was developed in~\cite{vorobev2022network} for DC microgrids, which introduces a topology invariant property and thus allows for ``plug-and-play'' operability of DC microgrids.}

    Recent advances in control and systems theory have extended the passivity condition by defining the phases of sectorial complex matrices, which is complementary to the definition of gains of complex matrices (i.e., singular values)~\cite{wang2020phases, chen2021phase, chen2019phase,Mao2022}. To be specific, the so-called small phase theorem, as a counterpart to the small gain theorem, was established in~\cite{chen2019phase,chen2021phase,Mao2022}, which extends the passivity condition and reduces its conservatism. Moreover, the small phase theorem can be incorporated with gain information, leading to a mixed small gain-phase theorem, which provides more flexibility in analyzing the system stability~\cite{zhao2022, woolcock2023mixed}. Ref.~\cite{huang2020h} shows that one can obtain a \textit{decentralized} stability condition based on the small gain theorem, while it may be conservative in the high-frequency range. This problem can possibly be resolved by further incorporating the small phase theorem, but it is not clear how it can fit into the setting of multi-converter systems.

    This paper proposes decentralized stability conditions for multi-converter systems based on the small gain theorem and the small phase theorem. {\tr We particularly focus on the small signal stability problem, by linearizing the system around its equilibrium point. 
    Although power systems generally contain nonlinear dynamics, the large signal stability problems are out of the scope of this paper.}
    The proposed conditions are based on splitting the system dynamics into network dynamics and individual converter dynamics and can be checked in a decentralized manner: if every converter satisfies certain conditions locally, then the interconnected closed-loop system is stable. Hence, our approach is modular, scalable, and can be used to analyze the stability of large-scale multi-converter systems. Moreover, the stability conditions are applicable to systems that contain simultaneously synchronous generators, GFM, and GFL converters. {\tr The decentralized stability conditions, in general, require full knowledge of the system. However, if the power network has a special structure, e.g., an identical R/X ratio, then we only need to know the gain (or equivalently, weighted connectivity) of the network characterized by gSCR (without knowing the detailed topology and network parameters).} 

    The rest of this paper is organized as follows: in Section~II we introduce the modeling of multi-converter systems. Section~III briefly reviews the mixed gain-phase stability criterion. Section~IV proposes the decentralized stability conditions. Section~V customizes the conditions for GFM-GFL hybrid systems. Section~VI provides simulation results. Section~VII compares our approach with passivity-based analysis and small gain analysis. Section~VIII concludes the paper.

\vspace{1mm}
\textit{Notation:} Let $\mathbb{R}$ denote the set of real numbers, $\mathbb{Z}$ denote the set of integers, and $\mathbb{C}$ the set of complex numbers. Let $\|x\|$ denote the two-norm of the vector $x$. For a matrix $A$, we use $A^*$ to denote its conjugate transpose and ${\rm det}(A)$ to denote its determinant. For a positive semi-definite matrix $A$, we use $\lambda_1(A)$ to denote its smallest eigenvalue. We use $I_n$ to denote the $n$-by-$n$ identity matrix (abbreviated as $I$ when the dimensions can be inferred from the context). We use $\otimes$ to denote the Kronecker product. We use $M = {\rm diag}(M_1, M_2,...,M_k)$ to denote the diagonal (or block-diagonal) matrix whose diagonal elements (or blocks) are $M_1, M_2,...,M_k$.

	
	\vspace{-1mm}
	
	\section{Modeling of Multi-Converter Power Systems}

 Consider a multi-converter system, where $N$ ($N \in \mathbb{Z}_{>0}$) power converters are interconnected via the power network, as illustrated in Fig.~\ref{Fig_multi_converter}. In our setting, we allow the power converters to employ different control methods and different control parameters. For instance, the converter may apply the widely-used GFL control, where a PLL is used to follow the grid frequency, as shown in Fig.~\ref{Fig_control_diagram}; one can also use GFM control which can actively establish the frequency and voltage for the power grid, e.g., by emulating the swing equation to generate the internal frequency and achieve synchronization~\cite{d2013virtual}.

\begin{figure}[!t]
	\centering
	\includegraphics[width=3.1in]{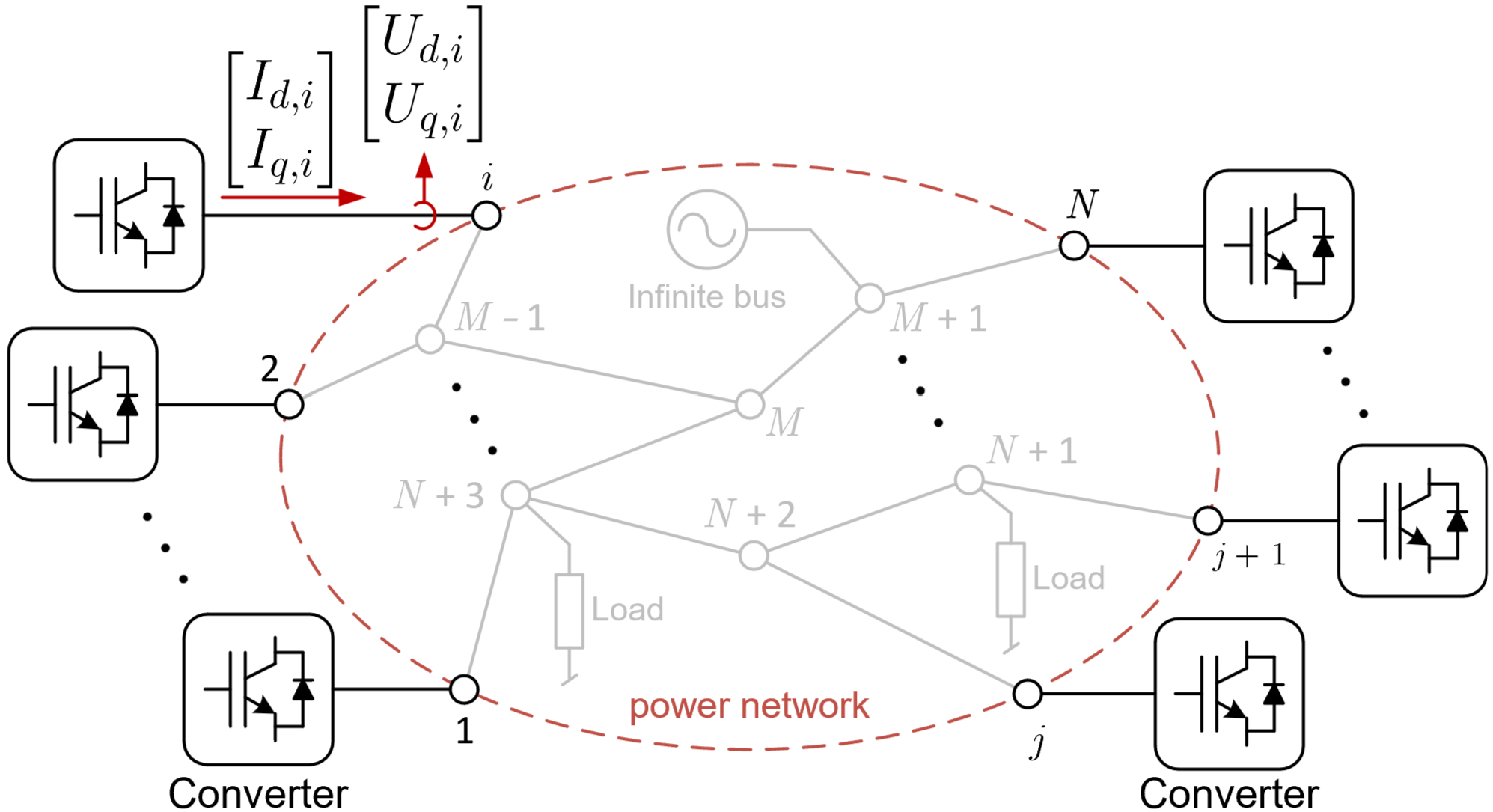}
	\vspace{-3mm}
	\caption{Illustration of a multi-converter power system.}
	\vspace{-2mm}
	\label{Fig_multi_converter}
\end{figure}

Regardless of the employed control method, the \textit{\tr small-signal dynamics} of the $i$-th converter can be compactly represented by its admittance model in the frequency domain~\cite{wang2017unified, wen2015analysis}
\begin{equation}\label{eq:admittance_model}
    -\begin{bmatrix} I_{d,i} \\ I_{q,i} \end{bmatrix} = S_i e^{J\theta_i} {\bf Y}_{{\rm C},i}(s) e^{-J\theta_i}\begin{bmatrix} U_{d,i} \\ U_{q,i} \end{bmatrix} \,,
\end{equation}
where ${\bf Y}_{{\rm C},i}(s)$ is a $2 \times 2$ transfer function matrix describing the converter's dynamics using local per-unit calculations (also known as the admittance model), 
$[ I_{d,i} \; I_{q,i} ]^\top$ and $[ U_{d,i} \; U_{q,i} ]^\top$ are respectively the output current and voltage \textit{\tr perturbations} of the \text{$i$-th} converter in a global $dq$ coordinate with a constant (nominal) rotating frequency (they correspond to ${\bf I}_{{\bf abc},i}$ and ${\bf U}_{{\bf abc},i}$ in Fig.~\ref{Fig_control_diagram}), $S_i$ is the capacity ratio of the $i$-th node’s rated capacity to the base capacity of per-unit calculation (we use the same base values when performing per-unit calculations for the global voltage and current signals).
We refer to, for instance~\cite{xin2022many} and~\cite{yang2020placing}, for the detailed derivations of ${\bf Y}_{{\rm C},i}(s)$ when GFL control or GFM control is applied. Usually, ${\bf Y}_{{\rm C},i}(s)$ is derived under a certain angle for the local $dq$ coordinate (with a constant rotating frequency), e.g., by aligning the $d$-axis with the steady-state voltage vector. We denote the \textit{steady-state angle difference} between the global $dq$ coordinate and the $i$-th converter's local $dq$ coordinate by a constant $\theta_i$, as shown in~\eqref{eq:admittance_model}, where $e^{J\theta_i} = \begin{bmatrix} \cos \theta_i & -\sin \theta_i \\ \sin \theta_i & \cos \theta_i \end{bmatrix} \vspace{1mm}$. This global $dq$ coordinate is used to ensure that all the current/voltage vectors are under the same coordinate when modeling the multi-converter system~\cite{huang2022impacts}.
Note that the dynamics of the synchronization unit (e.g., a PLL) are included in ${\bf Y}_{{\rm C},i}(s)$.
We extend~\eqref{eq:admittance_model} to include the dynamics of all the $N$ converters 
\begin{equation}\label{eq:admittance_devices}
    - \underbrace{ \begin{bmatrix} I_{d,1} \\ I_{q,1}  \\ \vdots \\ I_{d,N} \\ I_{q,N} \end{bmatrix}}_{=: {\bf I}} = 
    ({\bf S}\otimes I_2) \; e^{J {\vartheta}} \; {\bf Y}_{\rm C}^N(s) \; e^{-J {\vartheta}} \;
\underbrace{\begin{bmatrix} U_{d,1} \\ U_{q,1} \\  \vdots \\ U_{d,N} \\ U_{q,N} \end{bmatrix}}_{=:{\bf U}} \,,
\end{equation}
where ${\bf S} \!=\! {\rm diag}\{S_1,\dots,S_N\} \!\!\in\!\! \mathbb{R}^{N \times N}$ is the capacity ratio matrix, $e^{J {\vartheta}} \!\!=\!\! {\rm diag}\{e^{J\theta_1},\dots,e^{J\theta_N}\} \!\!\in\!\! \mathbb{R}^{2N \times 2N}$ is the steady-state coordinate transformation matrix,
and
${\bf Y}_{\rm C}^N(s)$ is a $2N \times 2N$ block-diagonal transfer matrix extended from ${\bf Y}_{{\rm C},i}(s)$ as
\begin{equation*}
    {\bf Y}_{\rm C}^N(s) = { \begin{bmatrix}
{\bf Y}_{{\rm C},1}(s) & & & \\
& {\bf Y}_{{\rm C},2}(s) & & \\
& & \ddots & \\
& & & {\bf Y}_{{\rm C},N}(s) \\
\end{bmatrix}} \,.
\end{equation*}

Following~\cite{huang2020h}, the small-signal dynamics of a multi-converter system can be represented by the closed-loop interaction between the devices' dynamics and the power network's dynamics, where the devices' dynamics are captured by the transfer function matrix ${\bf Y}_{\rm C}^N(s)$ in~\eqref{eq:admittance_devices}. We next develop the transfer function matrix for the power network, which may include transmission lines, loads, and infinite buses (to represent large-capacity generators or other areas).

\begin{figure}[!t]
	\centering
	\includegraphics[width=3.1in]{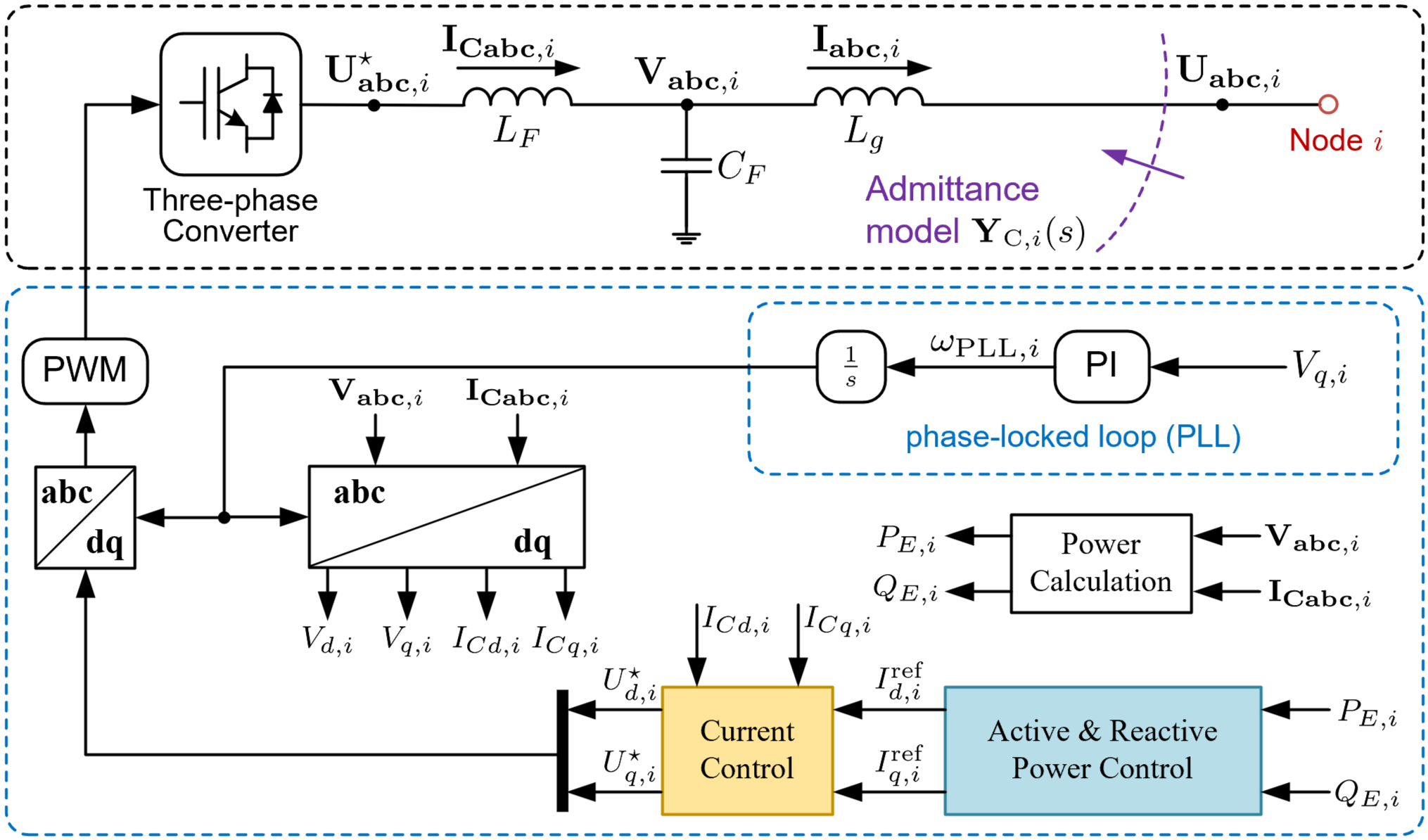}
	\vspace{-3mm}
	\caption{\tr Basic implementation of a grid-following (GFL) power converter. A more advanced PLL implementation (including voltage normalization and rated frequency feed-forward) will be considered in Example~\ref{ex:68_bus}.}
	\vspace{-3mm}
	\label{Fig_control_diagram}
\end{figure}

Consider the power network in~Fig.~\ref{Fig_multi_converter} which contains $N$ converter nodes, denoted by Nodes~$1 \sim N$, and $M - N$ interior nodes ($M \in \mathbb{Z}_{>0}$, $M \ge N$) as Nodes~$N+1 \sim M$. We consider one common grounded node, denoted by Node~$M+1$. Note that in small-signal dynamics, the infinite buses should also be considered as grounded since their voltage perturbations are zero. We consider a general setting where the dynamic equation of the line between Node~$i$ and Node~$j$ ($i,j \in \{1, \dots,M+1\}$) is
\begin{equation*}\label{eq:edge_ij}
    \begin{bmatrix} I_{d,ij} \\ I_{q,ij} \end{bmatrix} = {Y}_{ij}(s) \left( \begin{bmatrix} U_{d,i} \\ U_{q,i} \end{bmatrix} - \begin{bmatrix} U_{d,j} \\ U_{q,j} \end{bmatrix} \right) \,,
\end{equation*}
where $[ I_{d,ij} \; I_{q,ij} ]^\top$ is the current vector from Node~$i$ to Node~$j$ in the global $dq$ coordinate, $[ U_{d,i} \; U_{q,i} ]^\top$ is the voltage vector of Node~$i$ in the global $dq$ coordinate, and ${Y}_{ij}(s)$ is a $2 \times 2$ transfer function matrix encoding the dynamics of the line. For instance, if the line is composed of a resistor and an inductor (i.e., a typical power transmission line model), then we have 
\begin{equation}\label{eq:Y_ij}
    Y_{ij}(s) = B_{ij} \begin{bmatrix} s/\omega_0 + {\tr \epsilon_{ij}} & -1 \\ 1 & s/\omega_0 + {\tr \epsilon_{ij}} \end{bmatrix}^{-1} =: B_{ij} F_{\tr \epsilon_{ij}}(s) \,,
\end{equation}
where $B_{ij} = \frac{1}{X_{ij}} \vspace{0.5mm}$ is the line susceptance ($X_{ij}$ is the reactance), {\tr $\epsilon_{ij}$ is the R/X ratio of the line~$ij$}, and $\omega_0$ is the nominal frequency of the system. If there is a load (resistor) at Node~$i$, then we have (with $j = M+1$)
\begin{equation}
    Y_{ij}(s) = \begin{bmatrix} R_{{\rm Load},i}^{-1} & 0 \\ 0 & R_{{\rm Load},i}^{-1} \end{bmatrix} \,,
\end{equation}
where $R_{{\rm Load},i}$ denotes the resistance of the load; {\tr as another example, if there is a shunt capacitor (e.g., in a Pi transmission line model) at Node~$i$, then we have (with $j = M+1$)
\begin{equation*}
    Y_{ij}(s) = \begin{bmatrix} sC_i & -\omega_0 C_i \\ \omega_0 C_i & sC_i \end{bmatrix} \,,
\end{equation*}
where $C_i$ is the shunt capacitor at Node~$i$.}

We are now ready to construct the $2M \times 2M$ grounded Laplacian (transfer function) matrix ${\bf Y}(s)$ whose \textit{blocks} are 
\begin{equation}
\begin{split}
    {\bf Y}_{ij}(s) &= - Y_{ij}(s), i \ne j, \\
    {\bf Y}_{ii}(s) &= \sum_{j = 1}^{M+1} Y_{ij}(s) .
\end{split}
\end{equation}
By eliminating the $M-N$ interior nodes through Kron reduction~\cite{dorfler2012kron} one obtains a reduced network whose dynamics are captured by the following $2N \times 2N$ transfer function matrix
\begin{equation}\label{eq:Y_grid}
     {\bf I} = [\;{\bf Y}_1(s) - {\bf Y}_2(s) {\bf Y}_4^{-1}(s) {\bf Y}_3(s)\; ] {\bf U} =:{\bf Y}_{\rm grid}(s) {\bf U},
\end{equation}
where ${\bf Y}_1(s)$, ${\bf Y}_2(s)$, ${\bf Y}_3(s)$, and ${\bf Y}_4(s)$ are respectively $2N \times 2N$, $2N\times(2M-2N)$, $(2M-2N)\times 2N$, and $(2M-2N)\times (2M-2N)$ transfer function matrices defined by
\begin{equation*}
    {\bf Y}(s) =: \begin{bmatrix} {\bf Y}_1(s) & {\bf Y}_2(s) \\ {\bf Y}_3(s) & {\bf Y}_4(s) \end{bmatrix} .
\end{equation*}

Eqs.~\eqref{eq:Y_grid} and~\eqref{eq:admittance_devices} constitute the closed-loop dynamics of the multi-converter system, as shown in Fig.~\ref{Fig_closed_loop}. Note that the above modeling of the power network's dynamics is general as it can handle different dynamics in different branches. We provide an example to illustrate the structure of ${\bf Y}_{\rm grid}(s)$ below.

\begin{figure}[!t]
	\centering
	\includegraphics[width=3.2in]{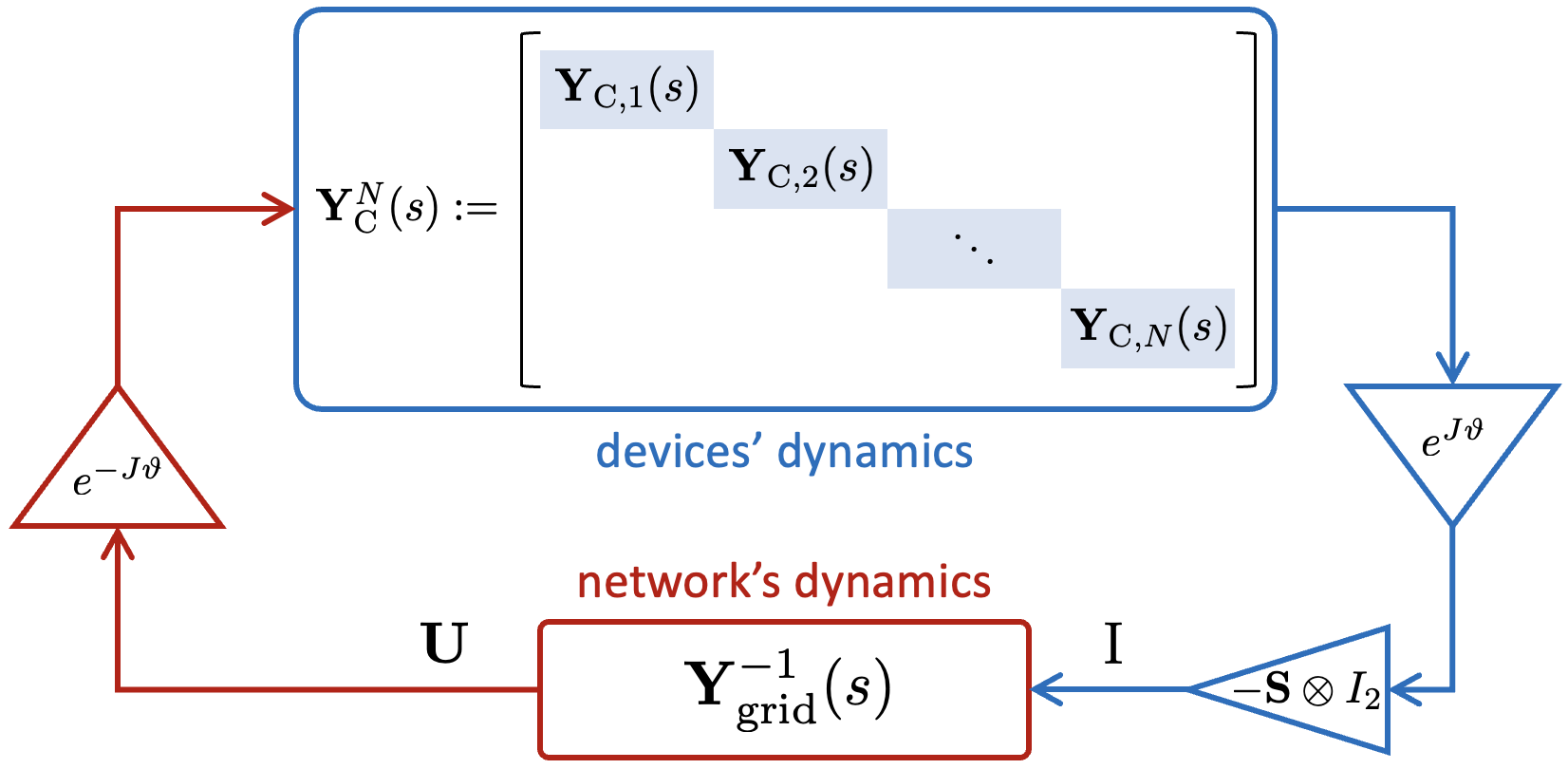}
	\vspace{-3mm}
	\caption{Closed-loop representation of a multi-{\tr device} system by combining~\eqref{eq:admittance_devices} and~\eqref{eq:Y_grid}.}
	\vspace{-2mm}
	\label{Fig_closed_loop}
\end{figure}

\begin{example}[Transmission network with an identical R/X ratio] \label{example1}
    Consider a power network {\tr with an identical R/X ratio $\epsilon$, i.e., $\epsilon_{ij} = \epsilon, \forall i,j \in \{1,...,M+1\}$, so that $Y_{ij}(s) = B_{ij}{F}_\epsilon(s)$, where 
    \begin{equation}\label{eq:F_sigma}
    {F}_\epsilon(s) = \begin{bmatrix} s/\omega_0 + {\epsilon} & -1 \\ 1 & s/\omega_0 + {\epsilon} \end{bmatrix}^{-1}.
    \end{equation}}
    Let ${\bf B} \!\in\! \mathbb{R}^{M \times M}$ be the grounded Laplacian matrix that encodes the network topology and susceptances, where ${\bf B}_{ij} = - B_{ij}$ if $i \ne j$, and ${\bf B}_{ii} = \sum_{j=1}^{M+1} B_{ij}$. By performing Kron reduction to $\bf B$, we obtain the reduced Laplacian matrix ${\bf B}_{\rm r} \in \mathbb{R}^{N \times N}$ which is calculated by ${\bf B}_{\rm r} = {\bf B}_1 - {\bf B}_2{\bf B}_4^{-1} {\bf B}_3$, where
    $$
        {\bf B}=: \begin{bmatrix} {\bf B}_1 \in \mathbb{R}^{N \times N} & {\bf B}_2 \in \mathbb{R}^{N \times (M-N)} \\ {\bf B}_3 \in \mathbb{R}^{(M-N) \times N} & {\bf B}_4 \in \mathbb{R}^{(M-N) \times (M-N)}\end{bmatrix}.
    $$
    Then, we have 
    \begin{equation}\label{eq:Ygrid_RX}
        {\bf Y}_{\rm grid}(s) = {\bf B}_{\rm r} \otimes {F}_\epsilon(s) \,.
    \end{equation}
    It can be seen that ${\bf Y}_{\rm grid}(s)$ has a special structure, {\tr i.e., it can be expressed as the Kronecker product of ${\bf B}_{\rm r}$ and ${F}_\epsilon(s)$} in this setting, which, as will be discussed later in this paper, allows us to derive simpler stability conditions compared with the general setting in~\eqref{eq:Y_grid} {\tr where the R/X ratios can be different across different lines.} {\tr In other words, our approach is general and can handle heterogeneous R/X ratios in power networks}. 
\end{example}

\vspace{0mm}

\section{Mixed Gain-Phase Stability Criterion}
This section introduces some preliminaries on the feedback stability of two linear time-invariant systems from a combined gain and phase perspective, which were recently established in \cite{chen2019phase,chen2021phase,Mao2022,zhao2022}, and will play an essential role in later developments. We first review the gains and phases of complex matrices. 

\vspace{-2mm}

\subsection{Gains and phases of complex matrices}

It has been well known that a complex matrix $A\in\mathbb{C}^{n\times n}$ has $n$ magnitudes (\textit{gains}), given by its singular values
\[
\sigma(A)=\begin{bmatrix}
    \sigma_1(A) & \sigma_2(A) & \cdots & \sigma_n(A) 
\end{bmatrix} ,
\]
arranged in such a way that
\begin{equation}
    \overline{\sigma}(A) =\sigma_1(A) \geq \dots \geq \sigma_n(A)=\underline{\sigma}(A).
    \tag{\bf Gains}
\end{equation}
By comparison, the phases of a complex matrix are less trivial concepts, which were recently defined in \cite{wang2020phases} based on the matrix's numerical range, as detailed below.
Firstly, the \textit{numerical range} of a complex matrix $A$ is defined as
\[
W(A)=\{x^*Ax: x\in \mathbb{C}^n, \|x\|=1\},
\]
which is a convex subset of $\mathbb{C}$ and contains the eigenvalues of $A$. If $0\notin W(A)$, then $A$ is said to be a \textit{sectorial matrix} \cite{wang2020phases}. For a sectorial $A$, there exists a nonsingular matrix $T$ and a diagonal unitary matrix $D$ such that 
$A=T^*DT$, which is called sectorial decomposition of $A$.
Note that $D$ is unique up to permutation and has all eigenvalues (i.e., diagonal entries) lying on an arc of the unit circle with length smaller than $\pi$. 
Then, the \textit{phases} of $A \in\mathbb{C}^{n\times n}$, denoted by 
\begin{equation}
    \overline\phi(A)=\phi_1(A) \geq \dots \geq \phi_n(A)=\underline\phi(A), 
    \tag{\bf Phases}
\end{equation}
are defined as the phases of the eigenvalues of $D$ so that $\overline\phi(A)-\underline\phi(A)<\pi$. 
Graphic illustrations of the numerical range $W(A)$ and the phases of a sectorial matrix $A$ are shown in Fig.~\ref{fig:sectorial}, where the two angles from the positive real axis to the two supporting rays of $W(A)$ are $\overline{\phi}(A)$ and $\underline\phi(A)$, respectively. The other phases of $A$ lie in between.

\begin{figure}[htb]
    \vspace{-4.5mm}
    \centering    \includegraphics[scale=0.14]{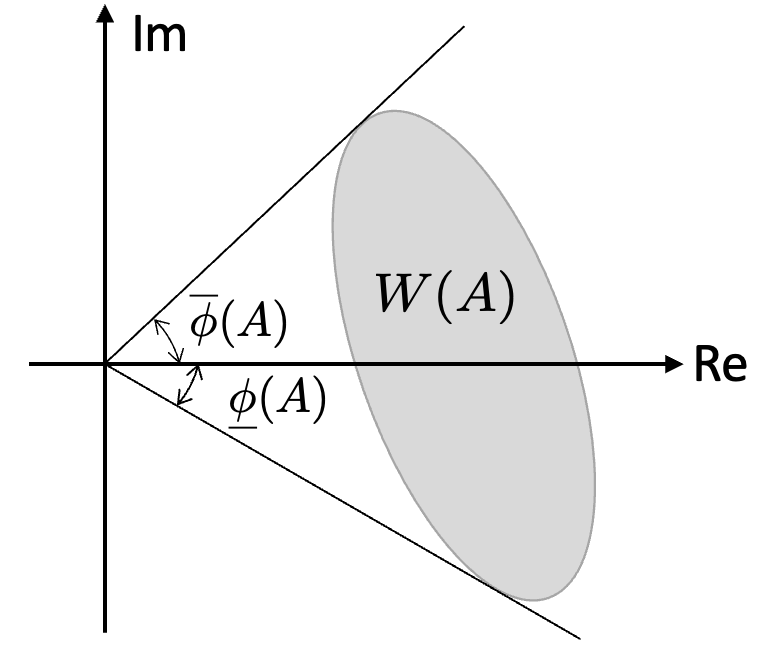}
    \vspace{-4.5mm}
    \caption{Phases of a sectorial matrix $A$ are contained between $[\underline\phi(A),\overline\phi(A)]$, and the gray area $W(A)$ is the numerical range of $A$.}
    \label{fig:sectorial}
\end{figure}

Based on the above definition, the phases of a complex matrix $A$ can be computed as follows. If $A$ is sectorial and $A = T^*DT$ is
its sectorial decomposition, then we have 
\[
A(A^*)^{-1} = T^*D^2(T^*)^{-1}.
\]
This means that the phases of $A$ (i.e., the phases of eigenvalues of $D$) are the halves of the phases of the eigenvalues of matrix
$A(A^*)^{-1}$. We refer the interested readers to~\cite{wang2020phases} for more details on the computation of the phases of a sectorial matrix.

\vspace{-2mm}
 
\subsection{Mixed small gain-phase theorem}

Let $G, H\in\mathcal{RH}_{\infty}^{m\times m}$, i.e., be $m\times m$ real rational proper stable transfer function matrices. The feedback interconnection of $G(s)$ and $H(s)$ in Fig.~\ref{fdbk} is said to
be stable if the \textit{Gang of Four matrix} (describing all input/output maps in Fig.~\ref{fdbk})
\begin{align*}
G(s)\#H(s)=\begin{bmatrix}
  (I + HG)^{-1} & (I + HG)^{-1}H\\
  G(I + HG)^{-1} & G(I + HG)^{-1}H
\end{bmatrix}
\end{align*}
is stable, i.e., $G\#H \in \mathcal{RH}^{2m \times 2m}_\infty$.

\setlength{\unitlength}{1mm}
\vspace{0mm}
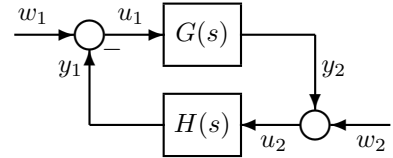
\begin{figure}[htb]
\begin{center}
\begin{picture}(50,20)
\thicklines 
\put(0,17){\vector(1,0){8}} \put(10,17){\circle{4}}
\put(12,17){\vector(1,0){8}} \put(20,13){\framebox(10,8){$G(s)$}}
\put(30,17){\line(1,0){10}} \put(40,17){\vector(0,-1){10}}
\put(38,5){\vector(-1,0){8}} \put(40,5){\circle{4}}
\put(50,5){\vector(-1,0){8}} \put(20,1){\framebox(10,8){$H(s)$}}
\put(20,5){\line(-1,0){10}} \put(10,5){\vector(0,1){10}}
\put(5,10){\makebox(5,5){$y_1$}} \put(40,10){\makebox(5,5){$y_2$}}
\put(0,17){\makebox(5,5){$w_1$}} \put(45,0){\makebox(5,5){$w_2$}}
\put(13,17){\makebox(5,5){$u_1$}} \put(32,0){\makebox(5,5){$u_2$}}
\put(10,10){\makebox(6,10){$-$}}
\end{picture}
\vspace{0mm}
\caption{A standard closed-loop system $G(s)\#H(s)$.}
\vspace{0mm}
\label{fdbk}
\end{center}
\end{figure}

The following mixed small gain-phase theorem characterizes the feedback stability, which was formulated in \cite[Theorem 2]{zhao2022}.
\begin{lemma}\label{small-gain-phase}
Let the open-loop systems $G, H\in \mathcal{RH}_{\infty}$, i.e., be real, rational, stable, and proper transfer function matrices. Then, $G(s)\# H(s)$ is stable if for each $\omega\in [0,\infty]$, either
\begin{enumerate}[i)]
    \item $\overline{\sigma}(G(j\omega)) \overline{\sigma}(H(j\omega))<1$, or
    \item $G(j\omega), H(j\omega)$ are sectorial, $\overline{\phi}(G(j\omega))\!+\!\overline{\phi}(H(j\omega))\!<\!\pi$ and $\underline{\phi}(G(j\omega))+\underline{\phi}(H(j\omega))>-\pi$.
\end{enumerate}
\end{lemma}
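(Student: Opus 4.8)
The plan is to run a generalized-Nyquist argument in tandem with a homotopy that scales the open-loop gain from $0$ to $1$. Since $G,H\in\mathcal{RH}_\infty$ have no poles in the closed right half-plane $\overline{\mathbb{C}}_+$ (including $s=\infty$), I would first record the standard fact that $G\#H$ is stable if and only if $\det(I+H(s)G(s))$ has no zeros in $\overline{\mathbb{C}}_+$: the matrix $I+HG$ and its adjugate lie in $\mathcal{RH}_\infty$, so if $\det(I+HG)$ is bounded away from zero on $\overline{\mathbb{C}}_+$ then $1/\det(I+HG)\in\mathcal{RH}_\infty$ and every block of $G\#H$ becomes a product of $\mathcal{RH}_\infty$ factors. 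By the argument principle (no right-half-plane poles), the number of zeros of $\det(I+HG)$ in the open right half-plane equals the winding number of the Nyquist plot $\omega\mapsto\det(I+H(j\omega)G(j\omega))$ about the origin, so it suffices to show that this determinant never vanishes on $j\mathbb{R}\cup\{\infty\}$ and that the winding number is zero.

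Both facts I would extract from a single claim: $\det(I+\tau H(j\omega)G(j\omega))\neq 0$ for every $\omega\in[0,\infty]$ and every $\tau\in[0,1]$. Granting it, put $g_\tau(s)=\det(I+\tau H(s)G(s))$; each $g_\tau$ is a proper rational function, analytic in $\overline{\mathbb{C}}_+$ and nonzero on $j\mathbb{R}\cup\{\infty\}$, so as $\tau$ runs through $[0,1]$ its open-right-half-plane zeros can neither cross the imaginary axis nor escape to infinity, and their number stays constant; at $\tau=0$ we have $g_0\equiv 1$ with no zeros, hence $g_1=\det(I+HG)$ has none either, which is exactly what is needed. To establish the claim I would argue one frequency at a time. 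Where case~i) holds, $\overline{\sigma}(\tau HG)\le\tau\,\overline{\sigma}(H)\,\overline{\sigma}(G)<1$, so the spectral radius of $\tau HG$ is below $1$ and $I+\tau HG$ is nonsingular. Where case~ii) holds, $G$, $H$, and $H^{-1}$ are sectorial, so the numerical ranges $W(G)$ and $W(H^{-1})$ lie in the angular sectors $[\underline\phi(G),\overline\phi(G)]$ and $[-\overline\phi(H),-\underline\phi(H)]$; for an eigenvalue $\lambda$ of $HG$ (equivalently of $GH$) we have $\det(G-\lambda H^{-1})=0$, hence $0\in W(G-\lambda H^{-1})\subseteq W(G)-\lambda W(H^{-1})$, which forces $[\underline\phi(G),\overline\phi(G)]$ to meet $[\arg\lambda-\overline\phi(H),\arg\lambda-\underline\phi(H)]$, i.e.\ $\underline\phi(G)+\underline\phi(H)\le\arg\lambda\le\overline\phi(G)+\overline\phi(H)$; the standing inequalities $\overline\phi(G)+\overline\phi(H)<\pi$ and $\underline\phi(G)+\underline\phi(H)>-\pi$ place this interval inside $(-\pi,\pi)$, so $\lambda\notin(-\infty,0]$ and $-1$ is not an eigenvalue of $\tau HG$ for any $\tau\in[0,1]$. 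This is the product-phase bound of \cite{wang2020phases}, and it closes the claim.

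The step I expect to be the main obstacle is precisely this product-phase bound in case~ii): because the phases of a matrix live on an arc of the unit circle, bounding the arguments of the eigenvalues of the product of two sectorial matrices by the sums of their extreme phases has to be carried out modulo $2\pi$, and it is exactly here that the two inequalities in ii) (rather than a single one) are indispensable; I would either invoke the rigorous statement of \cite{wang2020phases,chen2021phase} or reprove it from the sectorial decompositions $G=T_G^*D_GT_G$ and $H=T_H^*D_HT_H$ together with a numerical-range estimate for the similarity-transformed product. The remaining ingredients --- the $\mathcal{RH}_\infty$ algebra underlying the equivalence ``$\det(I+HG)$ nonvanishing on $\overline{\mathbb{C}}_+$ $\Leftrightarrow$ stability of $G\#H$'', and the constancy of the open-right-half-plane zero count along the homotopy, including the behaviour at $s=\infty$ and at frequencies where the gain and phase cases are mixed --- are routine, and I would dispatch them briefly.
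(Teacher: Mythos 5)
The paper does not prove this lemma at all: it is quoted verbatim from the cited reference (Theorem~2 of the paper referred to as \cite{zhao2022}), so there is no in-paper proof to compare against. Your reconstruction is, in outline, the standard proof of the mixed small gain--phase theorem and I find no gap in it. The two pillars are exactly right: (a) reduce stability of $G\#H$ to $\det(I+HG)$ being zero-free on the closed right half-plane via the $\mathcal{RH}_\infty$ adjugate argument, and establish this by a gain homotopy $\tau\mapsto\det(I+\tau HG)$ whose winding number over the closed Nyquist contour is an integer, continuous in $\tau$, and zero at $\tau=0$; (b) show pointwise in $\omega$ that $-1/\tau$ is never an eigenvalue of $H(j\omega)G(j\omega)$, using the spectral-radius bound $\rho(\tau HG)\le\tau\,\overline{\sigma}(H)\overline{\sigma}(G)<1$ at gain frequencies and the sectorial product-phase bound $\arg\lambda(HG)\in[\underline\phi(G)+\underline\phi(H),\,\overline\phi(G)+\overline\phi(H)]\subset(-\pi,\pi)$ at phase frequencies, which excludes the closed negative real axis. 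You correctly identify the only genuinely delicate step --- the modulo-$2\pi$ bookkeeping in the product-phase bound, which is legitimate precisely because each sectorial factor has phase spread strictly less than $\pi$ so the sum interval has length less than $2\pi$ --- and deferring it to the phase-definition reference is consistent with how the lemma itself is stated. Two small points worth making explicit if you write this out: the hypotheses are only imposed for $\omega\in[0,\infty]$, so you need realness of $G,H$ (conjugate symmetry $g_\tau(-j\omega)=\overline{g_\tau(j\omega)}$) to cover the negative-frequency half of the Nyquist contour; and the ``zeros cannot escape to infinity'' step is cleanest if phrased as the contour containing the point $s=\infty$, where $g_\tau(\infty)=\det(I+\tau H(\infty)G(\infty))\neq 0$ by the $\omega=\infty$ instance of the hypotheses.
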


        \section{Decentralized Stability Conditions}
	
	We next develop decentralized stability conditions for multi-converter systems based on Lemma~\ref{small-gain-phase}. According to Fig.~\ref{Fig_closed_loop}, the closed-loop dynamics of the multi-converter system can be described as $[({\bf S}\otimes I_2)e^{J {\vartheta}} \; {\bf Y}_{\rm C}^N(s) \; e^{-J {\vartheta}}]\; \# \; {\bf Y}^{-1}_{\rm grid}(s)$, or equivalently (as proved in Appendix~\ref{Append_Proof1}), as 
    \begin{equation}\label{eq:rescaled_sys}
        [e^{J {\vartheta}} \; {\bf Y}_{\rm C}^N(s) \; e^{-J {\vartheta}} ({\bf D} \otimes {\widetilde F}_\epsilon^{-1}(s)) ]\; \# \; \widetilde{\bf Y}_{\rm grid}^{-1}(s),
    \end{equation}
    where 
    \begin{equation}\label{eq:Y_C_grid}
        \widetilde{\bf Y}_{\rm grid}(s) := ({\bf S}^{-\frac{1}{2}} \otimes I_2) {\bf Y}_{\rm grid}(s) ( {\bf S}^{-\frac{1}{2}} {\bf D} \otimes {\tr{\widetilde F}_\epsilon^{-1}(s)}),
    \end{equation}
    and ${\bf D} = {\rm diag}\{D_1,...,D_N\} \in \mathbb{R}_{>0}^{N \times N}$ is a diagonal matrix to re-scale the converters' dynamics and the network dynamics (one can simply choose ${\bf D} = I_N$), which provides more flexibility in splitting the system. {\tr The $2 \times 2$ transfer function matrix ${\widetilde F}_\epsilon(s)$ is also used to re-scale the converters' dynamics and the network dynamics, which should be chosen as ${\widetilde F}_\epsilon(s) = {F}_\epsilon(s)$ if the power network has an identical R/X ratio~$\epsilon$ (${F}_\epsilon(s)$ has been defined in~\eqref{eq:F_sigma}). In this case, the gain of the network becomes a constant which simplifies the analysis, as will be seen in the development of Corollary~\ref{prop:Power_grid_strength}. Notice that the choice of ${\widetilde F}_\epsilon(s)$ does not affect the stability of the system, as it appears on both the converters' dynamics and the network dynamics and will be canceled out in the closed loop; however, it affects the way to split the system dynamics and therefore the perspective of viewing the two open-loop systems (i.e., the converters' dynamics and the network dynamics) which may influence the conservatism of our approach. Hence, if the power network has heterogeneous R/X ratios, then one can choose 
    \[{\widetilde F}_\epsilon(s) = \begin{bmatrix} s/\omega_0 + {\widetilde \epsilon} & -1 \\ 1 & s/\omega_0 + {\widetilde \epsilon}
    \end{bmatrix}^{-1},\]
    where $\widetilde \epsilon$ can be, for instance, the average R/X ratio, or the dominant R/X ratio of the network; while in any case, unlike the case of an identical R/X ratio, the gain of the network will be a function of the frequency rather than a constant.}
    
    In short, to simplify the description of the network dynamics and to reduce the conservatism of our approach, we consider \eqref{eq:rescaled_sys} instead of $[({\bf S}\otimes I_2) e^{J {\vartheta}} \; {\bf Y}_{\rm C}^N(s) \; e^{-J {\vartheta}}]\; \# \;  {\bf Y}^{-1}_{\rm grid}(s)$. In what follows, we focus on~\eqref{eq:rescaled_sys} to represent the dynamics of a multi-converter system.
    In~\eqref{eq:rescaled_sys}, the (re-scaled) network dynamics are compactly described by $\widetilde{\bf Y}_{\rm grid}(s)$. Then, for ease of illustration of the following results, we define the (re-scaled) dynamics of the $i$-th converter as
    \begin{equation}\label{eq:Y_C_convert}
        \widetilde{\bf Y}_{{\rm C},i}(s) := D_i {\bf Y}_{{\rm C},i}(s){\widetilde F}_\epsilon^{-1}(s),\; i \in \{1,..., N\}.
    \end{equation}
    The decentralized stability conditions are presented below.
 

\begin{proposition}[Decentralized stability conditions]
\label{prop:Decentralized_stability_conditions}
The multi-converter system in Fig.~\ref{Fig_closed_loop} is stable if the open-loop systems are stable, and for each $\omega\in [0,\infty]$, \textbf{either}
\begin{enumerate}[i)]
    \item the decentralized \textbf{gain} condition, i.e.,
    \begin{equation} \label{eq:decentralized_gain}
    \hspace{-3mm}
    \max\limits_{i } \; \overline \sigma(\widetilde{\bf Y}_{{\rm C},i}(j\omega)) <  \underline \sigma( \widetilde{\bf Y}_{\rm grid}(j\omega) ) \quad {\rm \textit{holds, \textbf{or}}}
    \end{equation}
    \item the decentralized \textbf{phase} condition, i.e.,
    \begin{equation}\hspace{-3mm} 
    \begin{cases}
    \textit{a) } \max\limits_{i } \; \overline \phi(\widetilde{\bf Y}_{{\rm C},i}(j\omega)) < \pi - \overline \phi( \widetilde{\bf Y}^{-1}_{\rm grid}(j\omega) ) , \\
    \textit{b) }\min\limits_{i } \; \underline \phi(\widetilde{\bf Y}_{{\rm C},i}(j\omega)) > -\pi - \underline \phi( \widetilde{\bf Y}^{-1}_{\rm grid}(j\omega) ) , \\
    \textit{c) } \widetilde{\bf Y}_{\rm grid}(j\omega) \text{ is sectorial},\\
         \textit{and d) } \max\limits_{i } \; \overline \phi(\widetilde{\bf Y}_{{\rm C},i}(j\omega))  - \min\limits_{i } \; \underline \phi(\widetilde{\bf Y}_{{\rm C},i}(j\omega)) < \pi ,
    \end{cases} \label{eq:decentralized_phase}
    \end{equation}
    holds.
\end{enumerate}
\end{proposition}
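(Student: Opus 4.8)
The plan is to read the closed loop in~\eqref{eq:rescaled_sys} as a Gang-of-Four interconnection $G(s)\#H(s)$ with $G(s):=e^{J\vartheta}\,{\bf Y}_{\rm C}^N(s)\,e^{-J\vartheta}\,({\bf D}\otimes\widetilde F_\epsilon^{-1}(s))$ and $H(s):=\widetilde{\bf Y}_{\rm grid}^{-1}(s)$, and to prove that the frequency-wise alternatives~(i)/(ii) of Lemma~\ref{small-gain-phase} applied to $G\#H$ coincide, term by term, with the decentralized gain condition~\eqref{eq:decentralized_gain} and phase condition~\eqref{eq:decentralized_phase}. Stability of $G\#H$ is equivalent to stability of the multi-converter system by the equivalence established in Appendix~\ref{Append_Proof1}, and $G,H\in\mathcal{RH}_\infty$ by the standing assumption that the (re-scaled) open-loop systems are stable and proper, so Lemma~\ref{small-gain-phase} is applicable; only the translation of its two conditions into decentralized ones remains.

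First I would expose the block structure of $G$. The three factors $e^{J\vartheta}$, ${\bf Y}_{\rm C}^N(s)$, and ${\bf D}\otimes\widetilde F_\epsilon^{-1}(s)$ are all block diagonal with $2\times2$ blocks, and since $\widetilde F_\epsilon^{-1}(s)$ is a real linear combination of $I_2$ and the $90^\circ$ rotation matrix it commutes with every rotation $e^{J\theta_i}$; hence the $i$-th diagonal block of $G(s)$ equals $e^{J\theta_i}\bigl(D_i{\bf Y}_{{\rm C},i}(s)\widetilde F_\epsilon^{-1}(s)\bigr)e^{-J\theta_i}=e^{J\theta_i}\,\widetilde{\bf Y}_{{\rm C},i}(s)\,e^{-J\theta_i}$. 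Because $e^{J\theta_i}$ is real orthogonal (hence unitary) with $e^{-J\theta_i}=(e^{J\theta_i})^{-1}=(e^{J\theta_i})^*$, the $i$-th block is a unitary similarity transform of $\widetilde{\bf Y}_{{\rm C},i}(s)$; both the singular values and the numerical range---hence the phases---are invariant under unitary similarity, so at every $\omega$ the $i$-th block of $G(j\omega)$ has the same gains, numerical range, phases, and sectoriality status as $\widetilde{\bf Y}_{{\rm C},i}(j\omega)$.

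Next I would invoke two facts about a block-diagonal matrix $A=\mathrm{diag}(A_1,\dots,A_N)$: $A^*A$ is block diagonal, so the singular values of $A$ are the union of those of the blocks and $\overline\sigma(A)=\max_i\overline\sigma(A_i)$; and $W(A)=\mathrm{conv}\bigl(\bigcup_i W(A_i)\bigr)$. From the first, $\overline\sigma(G(j\omega))=\max_i\overline\sigma(\widetilde{\bf Y}_{{\rm C},i}(j\omega))$; since moreover $\overline\sigma(H(j\omega))=1/\underline\sigma(\widetilde{\bf Y}_{\rm grid}(j\omega))$, Lemma~\ref{small-gain-phase}(i) becomes exactly~\eqref{eq:decentralized_gain}. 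From the second, $G(j\omega)$ is sectorial iff every $\widetilde{\bf Y}_{{\rm C},i}(j\omega)$ is sectorial and their phase arcs jointly span less than $\pi$, i.e.\ iff part (d) of~\eqref{eq:decentralized_phase} holds; and in that case the outer supporting rays of $W(G(j\omega))$ are the extreme ones among the blocks, so $\overline\phi(G(j\omega))=\max_i\overline\phi(\widetilde{\bf Y}_{{\rm C},i}(j\omega))$ and $\underline\phi(G(j\omega))=\min_i\underline\phi(\widetilde{\bf Y}_{{\rm C},i}(j\omega))$. For the grid side, $0\notin W(\widetilde{\bf Y}_{\rm grid}(j\omega))\iff 0\notin W(\widetilde{\bf Y}_{\rm grid}^{-1}(j\omega))$ (the characterization ``$A$ sectorial $\iff\mu A+\overline\mu A^*\succ0$ for some unit $\mu$'' is preserved under inversion), so $H(j\omega)$ is sectorial exactly when part (c) of~\eqref{eq:decentralized_phase} holds, while $\overline\phi(H(j\omega))=\overline\phi(\widetilde{\bf Y}_{\rm grid}^{-1}(j\omega))$ and $\underline\phi(H(j\omega))=\underline\phi(\widetilde{\bf Y}_{\rm grid}^{-1}(j\omega))$ by definition of $H$. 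Substituting these identities, $\overline\phi(G)+\overline\phi(H)<\pi$ and $\underline\phi(G)+\underline\phi(H)>-\pi$ in Lemma~\ref{small-gain-phase}(ii) become parts (a)--(b) of~\eqref{eq:decentralized_phase}, and together with (c)--(d) this is precisely the phase condition. Since the whole argument is pointwise in $\omega$, if at every $\omega\in[0,\infty]$ one of~\eqref{eq:decentralized_gain}, \eqref{eq:decentralized_phase} holds, then one of Lemma~\ref{small-gain-phase}(i)/(ii) holds there, and the lemma yields stability.

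The routine ingredients are the unitary invariance of gains and phases and the inversion fact for sectoriality. The step that needs real care is the block-diagonal numerical-range bookkeeping: establishing $W(G(j\omega))=\mathrm{conv}\bigl(\bigcup_i W(\widetilde{\bf Y}_{{\rm C},i}(j\omega))\bigr)$ and then extracting from it that (i) $G$ is sectorial iff every block is sectorial and part (d) of~\eqref{eq:decentralized_phase} holds, and (ii) the extreme phases of $G$ are the $\max/\min$ of the blocks' extreme phases---in particular checking that, when the blocks' phase arcs jointly span less than $\pi$, the convex hull of their numerical ranges neither contains the origin nor has supporting rays outside the blocks' combined phase range, so that the phase quantities behave as claimed.
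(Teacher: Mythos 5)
Your proposal is correct and follows essentially the same route as the paper's proof: rescale the loop into $G\#H$ with $G$ block-diagonal, use commutativity of $\widetilde F_\epsilon^{-1}$ with the rotations and unitary invariance to reduce the gains and phases of $G$ to the blockwise $\max/\min$, and then apply Lemma~\ref{small-gain-phase} pointwise in $\omega$. Your treatment is in fact slightly more careful than the paper's on two points it leaves implicit --- the identity $W(\mathrm{diag}(A_i))=\mathrm{conv}\bigl(\bigcup_i W(A_i)\bigr)$ underlying the blockwise phase bookkeeping, and the fact that sectoriality of $\widetilde{\bf Y}_{\rm grid}$ is equivalent to that of $\widetilde{\bf Y}_{\rm grid}^{-1}$, which bridges condition~(c) to the hypothesis of the lemma.
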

\begin{proof}
    See Appendix~\ref{Append_Proof1}.
\end{proof}

Note that the decentralized stability conditions above require that the open-loop systems are stable, which is generally true in the power systems context because the devices are designed to be stable when directly connected to an infinite bus and the power network is passive.

Proposition~\ref{prop:Decentralized_stability_conditions} evaluates the stability of multi-converter systems in a decentralized manner. To be specific, conditions \eqref{eq:decentralized_gain} and \eqref{eq:decentralized_phase} focus on comparing the converters' characteristics with the network characteristics. For ease of interpreting Proposition~\ref{prop:Decentralized_stability_conditions}, we introduce the following terminologies: 
\begin{enumerate}[i)]
\item the \textbf{gain} curve of the $i$-th converter is the curve defined by $\overline \sigma(\widetilde{\bf Y}_{{\rm C},i}(j\omega))$ as a function of $\omega$; \vspace{1mm}
\item the \textbf{phase area} of the $i$-th converter is the area defined by $[\underline \phi(\widetilde{\bf Y}_{{\rm C},i}(j\omega)), \overline \phi(\widetilde{\bf Y}_{{\rm C},i}(j\omega))]$ as a function of $\omega$; \vspace{1mm}
\item the \textbf{gain} curve of the power network is the curve defined by $\underline \sigma( \widetilde{\bf Y}_{\rm grid}(j\omega) )$ as a function of $\omega$; \vspace{1mm}
\item the \textbf{phase area} of the power network is the area defined by $[-\pi - \underline \phi( \widetilde{\bf Y}^{-1}_{\rm grid}(j\omega) ) , \pi - \overline \phi( \widetilde{\bf Y}^{-1}_{\rm grid}(j\omega) )]$ as a function of $\omega$; \vspace{1mm}
\item the \textbf{width} of an area $[a,b]$ is $b-a$; \vspace{1mm}
\item the \textbf{sectorial transition frequency} of the $i$-th converter is the frequency at which $\widetilde{\bf Y}_{{\rm C},i}(j\omega)$ becomes sectorial.
\end{enumerate}
It can be seen from~\eqref{eq:Y_C_grid} and~\eqref{eq:Y_C_convert} that the computation of the gain curve and the phase area of the $i$-th converter requires only the information of the $i$-th converter (control scheme, parameters, etc.) and the R/X ratio of the network lines; the computation of the gain curve and the phase area of the power network requires the information of the power network (topology, line impedance, etc). In some cases, we only need to know the gSCR of the power network that characterizes its strength, as discussed in Corollary~\ref{prop:Power_grid_strength}. Note that if a matrix is not sectorial, we consider its phase area to be $[-\infty, +\infty]$.

Then, the decentralized stability conditions in Proposition~\ref{prop:Decentralized_stability_conditions} can be interpreted as follows: 
\textit{The multi-converter system is stable if at any frequency, \textbf{either} all the gains of the converters are smaller than the gain of the power network, \textbf{or} all the phase areas of the converters are contained in the phase area of the power network while the union of all the converters' phase areas has a width less than $180^\circ$. }


The following result further considers the special case of identical R/X ratio in the network (in line with Example~\ref{example1}).

\begin{corollary}[Power grid strength]
\label{prop:Power_grid_strength}
Consider the scenario where each network line consists of a resistor and an inductor and has the same R/X ratio, corresponding to Example~\ref{example1}, such that the network dynamics can be represented by \eqref{eq:Ygrid_RX}. Then, the multi-converter system in Fig.~\ref{Fig_closed_loop} is stable if the open-loop systems are stable and for each $\omega\in [0,\infty]$, \textbf{either}
\begin{enumerate}[i)]
    \item the decentralized \textbf{gain} condition, i.e.,
    \begin{equation} \label{eq:decentralized_gain_gSCR}
    \hspace{-3mm}
    \max\limits_{i } \; \overline \sigma(\widetilde{\bf Y}_{{\rm C},i}(j\omega)) <  {\rm gSCR} \quad {\rm \textit{holds, \textbf{or}}}
    \end{equation}
    \item the decentralized \textbf{phase} condition, i.e.,
    \begin{equation}\hspace{-3mm} 
    \begin{cases}
    \textit{a) } \max\limits_{i } \; \overline \phi(\widetilde{\bf Y}_{{\rm C},i}(j\omega)) < \pi, \\
    \textit{b) }\min\limits_{i } \; \underline \phi(\widetilde{\bf Y}_{{\rm C},i}(j\omega)) > -\pi, \\
         \textit{and c) } \max\limits_{i } \; \overline \phi(\widetilde{\bf Y}_{{\rm C},i}(j\omega)) \! - \min\limits_{i } \; \underline \phi(\widetilde{\bf Y}_{{\rm C},i}(j\omega)) \!<\! \pi ,
    \end{cases} \label{eq:decentralized_phase_gSCR}
    \end{equation}
    holds, where ${\rm gSCR} = \lambda_1( {\bf S}^{-1} {\bf B}_{\rm r} ) = \underline \sigma( {\bf S}^{-\frac{1}{2}} {\bf B}_{\rm r} {\bf S}^{-\frac{1}{2}} )$ is the so-called generalized short-circuit ratio in power systems, and $\widetilde{\bf Y}_{{\rm C},i}(s) = {\bf Y}_{{\rm C},i}(s){F}_\epsilon^{-1}(s)$, $i \in \{1,..., N\}$.
\end{enumerate}
\end{corollary}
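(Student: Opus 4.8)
The plan is to derive Corollary~\ref{prop:Power_grid_strength} as a direct specialization of Proposition~\ref{prop:Decentralized_stability_conditions} to the network of Example~\ref{example1}, using the scaling choices $\widetilde{F}_\epsilon(s) = {F}_\epsilon(s)$ and ${\bf D} = I_N$. Under these choices the only substantive step is to show that the rescaled network transfer matrix $\widetilde{\bf Y}_{\rm grid}(s)$ degenerates to a \emph{constant} real symmetric positive definite matrix; once that is established, its smallest singular value and its phases can simply be read off and inserted into the conditions of Proposition~\ref{prop:Decentralized_stability_conditions}.

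First I would substitute ${\bf Y}_{\rm grid}(s) = {\bf B}_{\rm r}\otimes {F}_\epsilon(s)$ from~\eqref{eq:Ygrid_RX}, together with $\widetilde{F}_\epsilon = {F}_\epsilon$ and ${\bf D} = I_N$, into the definition~\eqref{eq:Y_C_grid}, and repeatedly invoke the Kronecker mixed-product rule $(A\otimes B)(C\otimes D)=(AC)\otimes(BD)$ to obtain
\[
\widetilde{\bf Y}_{\rm grid}(s) = ({\bf S}^{-\frac{1}{2}}\otimes I_2)\,({\bf B}_{\rm r}\otimes {F}_\epsilon(s))\,({\bf S}^{-\frac{1}{2}}\otimes {F}_\epsilon^{-1}(s)) = ({\bf S}^{-\frac{1}{2}}{\bf B}_{\rm r}{\bf S}^{-\frac{1}{2}})\otimes I_2 ,
\]
which is independent of $s$. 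I would then record that the reduced Laplacian ${\bf B}_{\rm r}$ is symmetric positive definite: the grounded susceptance Laplacian ${\bf B}$ is symmetric and positive definite whenever the network is connected to the grounded node (guaranteed here by loads, infinite buses, or shunt elements), and Kron reduction preserves symmetry and positive definiteness~\cite{dorfler2012kron}; consequently ${\bf S}^{-\frac{1}{2}}{\bf B}_{\rm r}{\bf S}^{-\frac{1}{2}}$ is symmetric positive definite, and by the similarity ${\bf S}^{-\frac{1}{2}}{\bf B}_{\rm r}{\bf S}^{-\frac{1}{2}} = {\bf S}^{\frac{1}{2}}({\bf S}^{-1}{\bf B}_{\rm r}){\bf S}^{-\frac{1}{2}}$ it shares its (real, positive) eigenvalues with ${\bf S}^{-1}{\bf B}_{\rm r}$.

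Next I would extract the quantities appearing in Proposition~\ref{prop:Decentralized_stability_conditions}. Because tensoring with $I_2$ only duplicates singular values and eigenvalues, and the smallest singular value of a symmetric positive definite matrix equals its smallest eigenvalue, one gets $\underline\sigma(\widetilde{\bf Y}_{\rm grid}(j\omega)) = \lambda_1({\bf S}^{-\frac{1}{2}}{\bf B}_{\rm r}{\bf S}^{-\frac{1}{2}}) = \lambda_1({\bf S}^{-1}{\bf B}_{\rm r}) = {\rm gSCR}$ for every $\omega$. Moreover $\widetilde{\bf Y}_{\rm grid}$ and $\widetilde{\bf Y}_{\rm grid}^{-1}$ are real symmetric positive definite, hence normal with spectra on the positive real axis, so their numerical ranges are subintervals of $(0,\infty)$ not containing the origin; they are therefore sectorial with all phases equal to zero, giving $\overline\phi(\widetilde{\bf Y}_{\rm grid}^{-1}(j\omega)) = \underline\phi(\widetilde{\bf Y}_{\rm grid}^{-1}(j\omega)) = 0$ and $\widetilde{\bf Y}_{\rm grid}(j\omega)$ sectorial for all $\omega$. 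Substituting $\underline\sigma(\widetilde{\bf Y}_{\rm grid}(j\omega)) = {\rm gSCR}$ into~\eqref{eq:decentralized_gain} yields~\eqref{eq:decentralized_gain_gSCR}; substituting the vanishing network phases into~\eqref{eq:decentralized_phase} makes item~c) of~\eqref{eq:decentralized_phase} (sectoriality of $\widetilde{\bf Y}_{\rm grid}$) automatic and collapses items a), b), d) of~\eqref{eq:decentralized_phase} to items a), b), c) of~\eqref{eq:decentralized_phase_gSCR}; and since $\widetilde{\bf Y}_{{\rm C},i}(s) = D_i{\bf Y}_{{\rm C},i}(s)\widetilde{F}_\epsilon^{-1}(s)$ reduces to ${\bf Y}_{{\rm C},i}(s){F}_\epsilon^{-1}(s)$ under the chosen scalings, the stated conditions follow from Proposition~\ref{prop:Decentralized_stability_conditions}.

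The main (and essentially only) obstacle is the positive-definiteness bookkeeping: making precise that ${\bf B}_{\rm r}$, and hence ${\bf S}^{-\frac{1}{2}}{\bf B}_{\rm r}{\bf S}^{-\frac{1}{2}}$, is symmetric positive definite, so that its smallest singular value coincides with its smallest eigenvalue, its numerical range avoids the origin, and that eigenvalue equals $\lambda_1({\bf S}^{-1}{\bf B}_{\rm r}) = {\rm gSCR}$. With this in hand, the corollary is pure substitution into Proposition~\ref{prop:Decentralized_stability_conditions}.
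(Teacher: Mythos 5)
Your proposal is correct and follows essentially the same route as the paper's own proof: choose ${\bf D}=I_N$ and $\widetilde F_\epsilon = F_\epsilon$, use the Kronecker mixed-product rule to collapse $\widetilde{\bf Y}_{\rm grid}(s)$ to the constant matrix ${\bf S}^{-\frac{1}{2}}{\bf B}_{\rm r}{\bf S}^{-\frac{1}{2}}\otimes I_2$, identify its smallest singular value with $\lambda_1({\bf S}^{-1}{\bf B}_{\rm r})={\rm gSCR}$ via similarity and positive definiteness, and note that its phases vanish so that the phase condition of Proposition~\ref{prop:Decentralized_stability_conditions} reduces to~\eqref{eq:decentralized_phase_gSCR}. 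The only difference is that you spell out why ${\bf B}_{\rm r}$ is symmetric positive definite (Kron reduction preserving these properties), which the paper asserts without elaboration; this is a welcome extra detail, not a different argument.
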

\begin{proof}
    See Appendix~\ref{Append_Proof2}.
\end{proof}

{\tr Note that Corollary~\ref{prop:Power_grid_strength} is a special case of Proposition~\ref{prop:Decentralized_stability_conditions} by considering homogeneous network dynamics (i.e., with an identical R/X ratio), where the gain of the power network becomes a constant (i.e., gSCR) and the phase area of the power network is $[-180^\circ, 180^\circ]$ at any frequency.}
In what follows, we give examples to illustrate how the decentralized stability conditions can be used to evaluate the system stability.

\begin{figure}[!t]
	\centering
	\includegraphics[width=2.8in]{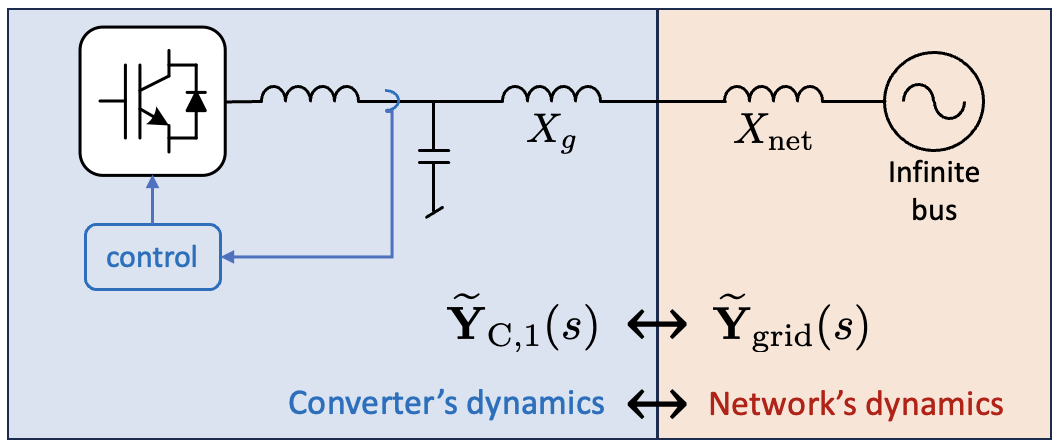}
	\vspace{-3mm}
	\caption{A single converter connected to an infinite bus.}
	\vspace{-2mm}
	\label{Fig_single_converter}
\end{figure}

\begin{example}[A single-converter system]
\label{ex:single_C}
As a starting point, let's consider a single converter connected to an infinite bus, as shown in Fig.~\ref{Fig_single_converter}. The above decentralized stability conditions evaluate the system stability by comparing the converter's dynamics with the network dynamics, as illustrated in Fig.~\ref{Fig_single_converter}, where the converter's dynamics are represented by $\widetilde{\bf Y}_{{\rm C},1}(s)$, and the network dynamics are represented by $\widetilde{\bf Y}_{\rm grid}(s)$. 
We assume that $\widetilde{\bf Y}_{{\rm C},1}(s)$ is stable, which is generally true because a converter should be designed to be stable when it is directly connected to an infinite bus (without $X_{\rm net}$). Moreover, $\widetilde{\bf Y}^{-1}_{\rm grid}(s)$ is stable because the network is passive. In a single-converter system, we simply choose ${\bf D} = 1$ and ${\bf S} = 1$. We are now ready to compare $\widetilde{\bf Y}_{{\rm C},1}(s)$ with $\widetilde{\bf Y}_{\rm grid}(s)$.

\begin{figure}[!t]
	\centering
	\includegraphics[width=3.4in]{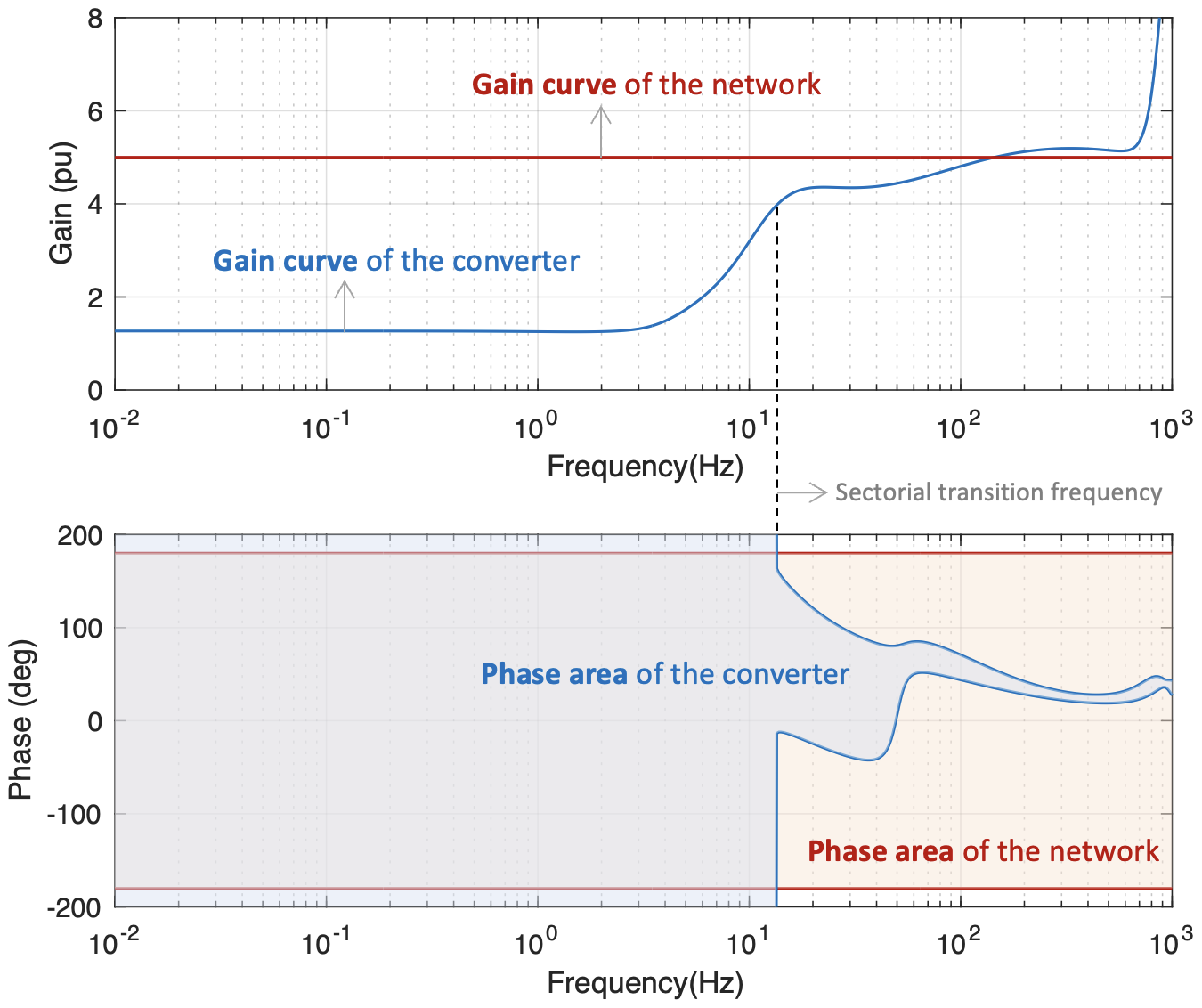}
	\vspace{-3mm}
	\caption{Gain curves and phase areas of the single-converter system (with $X_{\rm net} = 0.2~{\rm pu}$). The system is stable because 1) above the sectorial transition frequency, the phase area of the converter is contained in the phase area of the network, and 2) below the sectorial transition frequency, the gain curve of the converter is lower than the gain curve of the network.}
	\vspace{-4mm}
	\label{Fig_gain_phase_single1}
\end{figure}

Fig.~\ref{Fig_gain_phase_single1} plots the gain curve and phase area of the converter (obtained from $\widetilde{\bf Y}_{{\rm C},1}(s)$) and the gain curve and phase area of the network (obtained from $\widetilde{\bf Y}_{\rm grid}(s)$). The converter employs the GFL control in Fig.~\ref{Fig_control_diagram} to regulate active/reactive power~\cite{huang2019grid}, and the main parameters are given in Appendix~\ref{Appenxix:GFL_parameters}.

Since the network part has only an inductor, Corollary~\ref{prop:Power_grid_strength} can be used. In this case, the gain of the network is a constant (i.e., ${\rm gSCR} = 1/X_{\rm net} = 5$) and the phase area of the network is $[-180^\circ,+180^\circ]$ at any frequency. It can be seen that the phase area of the converter is contained in the phase area of the network above the sectorial transition frequency and its width is less $180^\circ$, thereby satisfying condition ii) of Corollary~\ref{prop:Power_grid_strength}; below the sectorial transition frequency, the gain curve of the converter is below the gain curve of the network,  satisfying condition i) of Corollary~\ref{prop:Power_grid_strength}. Hence, we conclude that the closed-loop system is stable, which is true, as will be validated in the simulation section.
Moreover, it can be seen that the gain of the converter at the sectorial transition frequency is 4. Hence, in this case, the system is stable if the gain of the network (i.e., gSCR) is larger than 4. Note that in this case, we include the leakage inductance of the step-up transformers in the converter side ($X_g = 0.15~{\rm pu}$), and the converter's dynamics can be considered as split from the network's dynamics at the high-voltage side (e.g., 220~kV).
\end{example}

\begin{example}[A three-converter system]
\label{ex:three_C}

\begin{figure}[!t]
	\centering
	\includegraphics[width=3.3in]{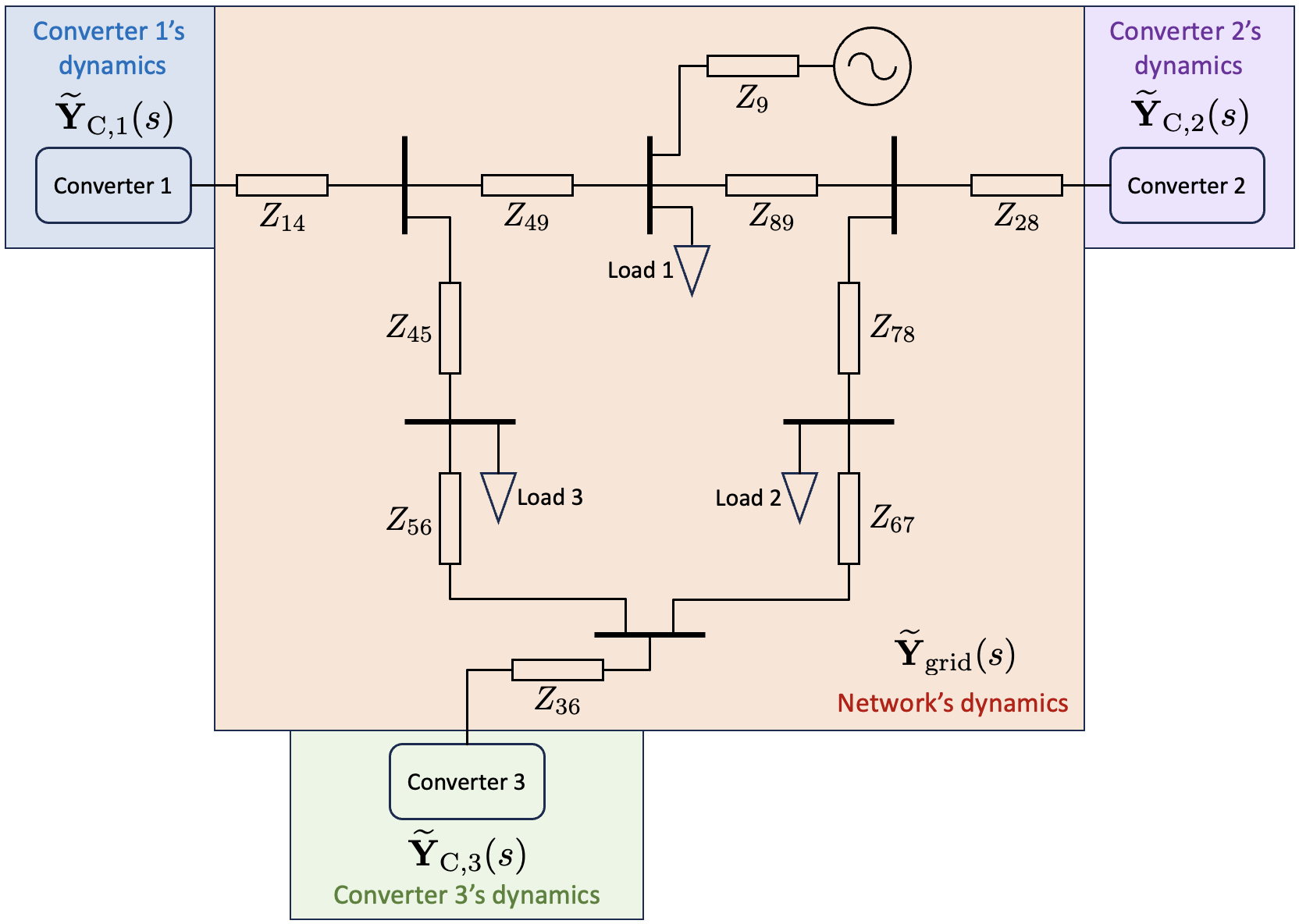}
	\vspace{-3mm}
	\caption{A three-converter test system that is split into four modules (a network and three converters).}
	\vspace{-1mm}
	\label{Fig_3_converter}
\end{figure}

\begin{figure}[!t]
	\centering
	\includegraphics[width=3.4in]{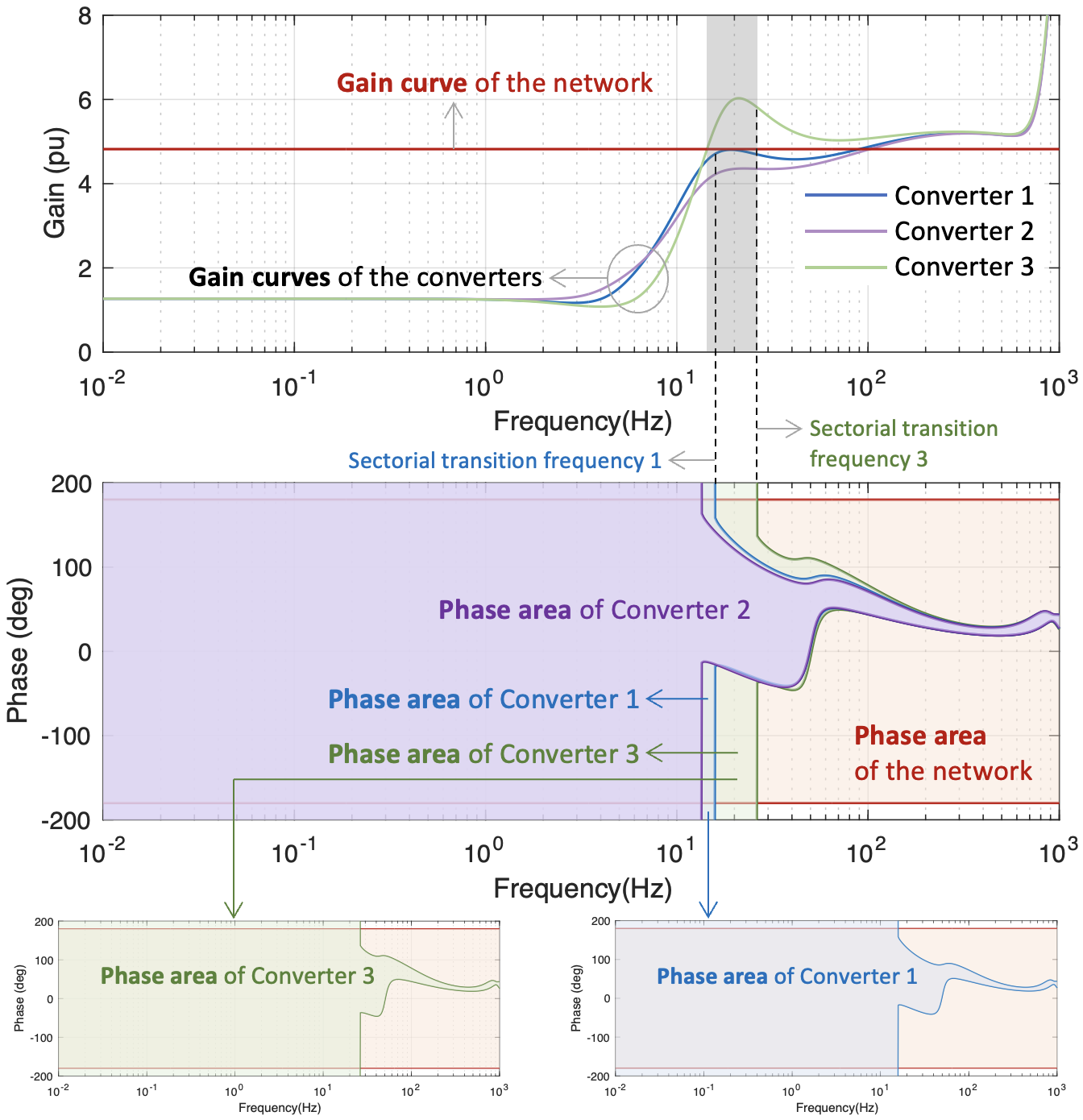}
	\vspace{-3mm}
	\caption{Gain curves and phase areas of the network and the converters.}
	\vspace{-3mm}
	\label{Fig_gain_phase_converter3_1}
\end{figure}

We consider now the three-converter system in Fig.~\ref{Fig_3_converter}, where all the converters employ GFL control. The PLL bandwidths of the three converters are respectively 70~rad/s, 40~rad/s, and 150~rad/s, and the other parameters are given in Appendix~\ref{Appenxix:GFL_parameters}. Usually, to evaluate the stability of such a multi-converter system, one needs to build the closed-loop model (e.g., the state-space model), and then compute the eigenvalues. However, this can result in complicated modeling processes and high-dimensional models. Our approach provides a simpler way by focusing on the modular converters' dynamics and network dynamics. 

Fig.~\ref{Fig_gain_phase_converter3_1} plots the gains curves and phase areas of the three converters as well as the gain curve and phase area of the power network. We choose ${\bf D} = I$ and ${\bf S} = I$ as the three converters have the same capacity. We assume an identical R/X ratio for the power network and use Corollary~\ref{prop:Power_grid_strength}. Under this setting, it can be seen that the gain of the network is a constant (${\rm gSCR} = 4.82$) and the phase area of the network is $[-180^\circ,+180^\circ]$ at any frequency. 
Converter~3 has the highest sectorial transition frequency, above which the phase areas of all the converters are contained in the phase area of the network, and moreover, the union of all the converters' phase areas has a width less than $180^\circ$. Hence, condition ii) of Corollary~\ref{prop:Power_grid_strength} is satisfied above Converter~3's sectorial transition frequency (26~Hz). However, all the converters' gain curves are below the gain curve of the network below 14.4~Hz, where condition i) of Corollary~\ref{prop:Power_grid_strength} is satisfied. Between 14.4~Hz and 26~Hz (the gray area in Fig.~\ref{Fig_gain_phase_converter3_1}), neither condition i) nor ii) of Corollary~\ref{prop:Power_grid_strength} is satisfied, and one cannot conclude that the closed-loop system is stable according to the decentralized stability conditions. As will be validated in the simulation section, the three-converter system is unstable in this case with the oscillation frequency being 15~Hz, lying exactly in the gray area.

It is worth mentioning that {\tr if a system is unstable, then the decentralized stability conditions will definitely be violated, and our approach can also tell which device in the system causes instabilities.} For instance, it can be deduced from Fig.~\ref{Fig_gain_phase_converter3_1} that Converter~3 results in the gray area where Corollary~\ref{prop:Power_grid_strength} is not satisfied. To be specific, if Converter~3 employs a better design, for instance, having the same design as Converter~1 (with the PLL bandwidth being 40~rad/s), then condition ii) of Corollary~\ref{prop:Power_grid_strength} is satisfied above Converter~1's sectorial transition frequency (15.8~Hz). Moreover, below this frequency, all the converters' gain curves are below the gain curve of the network, thereby satisfying condition i) of Corollary~\ref{prop:Power_grid_strength}. Hence, by comparison, we conclude that the closed-loop system would have been stable if Converter~3 has a better design, and it is the high PLL bandwidth of Converter~3 (150~rad/s) that causes the instability. 


\end{example}

\vspace{-2mm}

\section{Conditions for {\tr SG-}GFM-GFL Hybrid Systems}

{\tr The above analysis focuses on the interaction between multiple GFL converters and the power network. However, future power systems will contain GFL converters, GFM converters, and synchronous generators (SGs) at the same time. In what follows, we further consider GFM converters and SGs in our approach to make it suitable for analyzing future bulk power systems. We start with systems that contain both GFL and GFM converters.}

\subsection{\tr System partitioning and equivalent subsystems}

In Example~\ref{ex:three_C}, the system contains multiple GFL converters, and the analysis indicates that a lower gain of a converter is better for the closed-loop stability, which requires the converter to have a lower equivalent admittance. However, a GFM converter may have high admittance due to its voltage source behavior~\cite{xin2022many} and could be difficult to satisfy the gain condition. To this end, we propose to merge the GFM converters into an equivalent network, as illustrated in Fig.~\ref{Fig_equivalent_syst} and will be elaborated below.
Note that there is no unique way to split the system (into devices' dynamics and network's dynamics, as shown in Fig.~\ref{Fig_closed_loop}), and different ways may result in different levels of conservatism. We next present a new way to split the system dynamics, which is based on the idea of incorporating some of the converters (especially GFM converters) into the network dynamics and deriving the equivalent subsystems.

\begin{lemma}[Equivalent subsystems]
\label{lem:Equivalent_systems}
The multi-device system in Fig.~\ref{Fig_closed_loop} is stable if the following two subsystems are stable:
\vspace{0mm}
\begin{subequations}
\begin{empheq}[left={\empheqlbrace}]{align}
& {\rm subsystem~1:} \begin{bmatrix} 0 & \\ & {\bf Y}_{{\rm C}\{2\}}(s) \end{bmatrix} \; \# \; {\bf Y}^{-1}_{\rm grid}(s), \label{eq:subsys1} \\
& {\rm subsystem~2:} \; {\bf Y}_{{\rm C}\{1\}}(s) \; \# \; {\bf Y}^{-1}_{\rm gridC}(s), \label{eq:subsys2}
\end{empheq}
\end{subequations}
\vspace{0mm}
where 
\begin{equation}\label{eq:equil_network}
{\bf Y}_{\rm gridC}(s) = {\bf Y}_{\rm g}^1(s) - {\bf Y}_{\rm g}^2(s) [{\bf Y}_{{\rm C}\{2\}}(s) + {\bf Y}_{\rm g}^4(s)]^{-1} {\bf Y}_{\rm g}^3(s),
\end{equation}
\begin{equation}
    \begin{bmatrix}
    {\bf Y}_{\rm g}^1(s) & {\bf Y}_{\rm g}^2(s) \vspace{1mm} \\ 
    {\bf Y}_{\rm g}^3(s) & {\bf Y}_{\rm g}^4(s) \\
\end{bmatrix} := {\bf Y}_{\rm grid}(s), \; {\rm and}
\end{equation}
\begin{equation*}\label{eq:two_sets}
\begin{bmatrix} {\bf Y}_{{\rm C}\{1\}}(s) & \\ & {\bf Y}_{{\rm C}\{2\}}(s) \end{bmatrix} := ({\bf S}\otimes I_2)e^{J {\vartheta}} \; {\bf Y}_{\rm C}^N(s) \; e^{-J {\vartheta}},
\end{equation*}
with the dimensions of ${\bf Y}_{{\rm C}\{1\}}(s)$ and ${\bf Y}_{{\rm C}\{2\}}(s)$ match the dimensions of ${\bf Y}_{\rm g}^1(s)$ and ${\bf Y}_{\rm g}^4(s)$, respectively. Moreover, if subsystem~1 is stable, then ${\bf Y}^{-1}_{\rm gridC}(s)$ is stable. This is important to ensure that the open-loop systems of subsystem~2 is stable so the mixed small gain-phase theorem can be applied.
\end{lemma}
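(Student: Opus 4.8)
\emph{Proof sketch.} The plan is to reduce the stability of the loop in Fig.~\ref{Fig_closed_loop} to the absence of closed right half-plane (CRHP) zeros of a return-difference determinant, to factor that determinant through a Schur complement, and to match the two factors with the characteristic functions of subsystems~\eqref{eq:subsys1} and~\eqref{eq:subsys2}. Write $G(s):=({\bf S}\otimes I_2)e^{J{\vartheta}}\,{\bf Y}_{\rm C}^N(s)\,e^{-J{\vartheta}}={\rm diag}\{{\bf Y}_{{\rm C}\{1\}}(s),{\bf Y}_{{\rm C}\{2\}}(s)\}$, $H(s):={\bf Y}^{-1}_{\rm grid}(s)$, and $G_1(s):={\rm diag}\{0,{\bf Y}_{{\rm C}\{2\}}(s)\}$; all of $G,H,G_1\in\mathcal{RH}_\infty$ by hypothesis. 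Since the open-loop maps are stable, a standard Nyquist-type argument (the one underlying Lemma~\ref{small-gain-phase}) shows that $G\#H$ is stable if and only if its return-difference determinant $\chi(s):=\det(I+H(s)G(s))$ is zero-free on the CRHP including $s=\infty$; analyticity there is automatic because $HG$ is stable. I will invoke the same criterion for subsystem~\eqref{eq:subsys1}, whose characteristic function is $\chi_1(s):=\det(I+H(s)G_1(s))$, and --- once it is shown to be well posed --- for subsystem~\eqref{eq:subsys2}, whose characteristic function is $\chi_2(s):=\det(I+{\bf Y}^{-1}_{\rm gridC}(s){\bf Y}_{{\rm C}\{1\}}(s))$.

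The algebraic heart is the identity $\chi=\chi_1\chi_2$ as rational functions. To obtain it, partition ${\bf Y}_{\rm grid}$ as in the statement; since $G$ and $G_1$ are block-diagonal, only the diagonal blocks are perturbed, and the Schur-complement determinant formula gives $\det({\bf Y}_{\rm grid}+G)=\det({\bf Y}_{\rm g}^4+{\bf Y}_{{\rm C}\{2\}})\,\det({\bf Y}_{\rm gridC}+{\bf Y}_{{\rm C}\{1\}})$ with ${\bf Y}_{\rm gridC}$ exactly as in~\eqref{eq:equil_network}, and $\det({\bf Y}_{\rm grid}+G_1)=\det({\bf Y}_{\rm g}^4+{\bf Y}_{{\rm C}\{2\}})\,\det({\bf Y}_{\rm gridC})$. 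Using $I+HG={\bf Y}^{-1}_{\rm grid}({\bf Y}_{\rm grid}+G)$ and likewise for $G_1$ one obtains $\chi=\det({\bf Y}^{-1}_{\rm grid})\det({\bf Y}_{\rm grid}+G)$, $\chi_1=\det({\bf Y}^{-1}_{\rm grid})\det({\bf Y}_{\rm g}^4+{\bf Y}_{{\rm C}\{2\}})\det({\bf Y}_{\rm gridC})$, and $\chi_2=\det({\bf Y}_{\rm gridC})^{-1}\det({\bf Y}_{\rm gridC}+{\bf Y}_{{\rm C}\{1\}})$; multiplying $\chi_1\chi_2$ cancels $\det({\bf Y}_{\rm gridC})$ and leaves $\det({\bf Y}^{-1}_{\rm grid})\det({\bf Y}_{\rm grid}+G)=\chi$, as claimed.

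Third, I must certify that subsystem~\eqref{eq:subsys2} is genuinely a feedback of \emph{stable} systems, i.e., ${\bf Y}^{-1}_{\rm gridC}\in\mathcal{RH}_\infty$ --- this is the last assertion of the lemma and is needed before $\chi_2$ carries any stability meaning. Stability of subsystem~\eqref{eq:subsys1} forces its loop to be well posed, hence ${\bf Y}_{\rm g}^4+{\bf Y}_{{\rm C}\{2\}}$ to be properly invertible and ${\bf Y}_{\rm gridC}$ well defined; moreover one of the Gang-of-Four blocks of subsystem~\eqref{eq:subsys1} is $(I+HG_1)^{-1}H$, which a short computation with $H={\bf Y}^{-1}_{\rm grid}$ identifies as $({\bf Y}_{\rm grid}+G_1)^{-1}$. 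By the block-matrix inversion formula, the $(1,1)$ sub-block of $({\bf Y}_{\rm grid}+G_1)^{-1}$ is $({\bf Y}_{\rm g}^1-{\bf Y}_{\rm g}^2({\bf Y}_{\rm g}^4+{\bf Y}_{{\rm C}\{2\}})^{-1}{\bf Y}_{\rm g}^3)^{-1}={\bf Y}^{-1}_{\rm gridC}$, and a sub-block of a matrix in $\mathcal{RH}_\infty$ is again in $\mathcal{RH}_\infty$, so stability of subsystem~\eqref{eq:subsys1} yields ${\bf Y}^{-1}_{\rm gridC}\in\mathcal{RH}_\infty$. Now both subsystems are interconnections of stable systems, so by the criterion of the first paragraph their stability is equivalent to $\chi_1$, respectively $\chi_2$, being zero-free (and analytic) on the CRHP including infinity; hence so is $\chi=\chi_1\chi_2$, and therefore the system in Fig.~\ref{Fig_closed_loop} is stable.

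I expect the third step to be the main obstacle: showing that the equivalent-network impedance ${\bf Y}^{-1}_{\rm gridC}$ is a sub-block of a closed-loop transfer matrix of subsystem~\eqref{eq:subsys1} --- so that its stability is genuinely \emph{inherited} rather than assumed --- together with the properness and well-posedness bookkeeping needed before the Nyquist-type determinant argument can be applied verbatim to all three interconnections. By comparison, the Schur-complement factorization $\chi=\chi_1\chi_2$ in the second step is routine linear algebra.
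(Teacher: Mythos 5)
Your proposal is correct, and its core --- the Schur-complement factorization of the return-difference determinant $\det(I+{\bf Y}_{\rm grid}^{-1}G)$ into the product of the characteristic functions of subsystems~\eqref{eq:subsys1} and~\eqref{eq:subsys2} --- is exactly the paper's argument (the paper writes the same chain of determinant identities in~\eqref{eq:char_poly_equil}). Where you genuinely diverge is in the last assertion, the stability of ${\bf Y}^{-1}_{\rm gridC}$. The paper proves it by factoring the return-difference determinant of subsystem~1 as $\det({\bf Y}_{\rm gridC})\,\det({\bf Y}_{{\rm C}\{2\}}+{\bf Y}_{\rm g}^4)\,\det({\bf Y}^{-1}_{\rm grid})$ and arguing that stability of subsystem~1 rules out unstable zeros of $\det({\bf Y}_{\rm gridC})$, hence unstable poles of ${\bf Y}^{-1}_{\rm gridC}$; this is a scalar-determinant argument that quietly relies on no unstable pole--zero cancellation between the factors and on the adjugate of ${\bf Y}_{\rm gridC}$ contributing no unstable poles. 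Your route instead exhibits ${\bf Y}^{-1}_{\rm gridC}$ as the $(1,1)$ sub-block of the Gang-of-Four entry $(I+HG_1)^{-1}H=({\bf Y}_{\rm grid}+G_1)^{-1}$ of subsystem~1, so its membership in $\mathcal{RH}_\infty$ is inherited directly from closed-loop stability as a sub-block of a stable transfer matrix. That is a cleaner and slightly more rigorous derivation of the same fact, at the cost of invoking the block-matrix inversion formula; both arguments (and both proofs as a whole, at the determinant level) still lean on open-loop stability to exclude hidden unstable modes, which you state as a hypothesis, so nothing is missing.
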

\begin{proof}
    See Appendix~\ref{Append_Proof3}.
\end{proof}

It is worth mentioning that~\eqref{eq:char_poly_equil} in the proof indicates that the closed-loop poles of the original multi-converter system are also closed-loop poles of subsystem~1 in~\eqref{eq:subsys1} or subsystem~2 in~\eqref{eq:subsys2}. Hence, instead of directly studying the original system, we can focus on the stability of the two subsystems. 
Notice that Lemma~\ref{lem:Equivalent_systems} divides all the converters into two sets (i.e., ${\bf Y}_{{\rm C}\{1\}}(s)$ and ${\bf Y}_{{\rm C}\{2\}}(s)$), as shown in~\eqref{eq:two_sets}. In our setting, the GFL converters are included in ${\bf Y}_{{\rm C}\{1\}}(s)$ while the GFM converters are included in ${\bf Y}_{{\rm C}\{2\}}(s)$. {\tr If subsystem~1 in~\eqref{eq:subsys1} contains only one GFM converter, namely, a single-converter-infinite-bus system, then its stability can be analyzed using standard state-space method, impedance-based method, or by using our approach in this paper (see Example~\ref{ex:single_C}). If subsystem~1 contains multiple GFM converters, then its stability can also be analyzed using the proposed decentralized stability conditions (see Example~\ref{ex:68_bus} below).}
After analyzing the stability of subsystem~1, we can focus on the stability of subsystem~2 in~\eqref{eq:subsys2} where the \textit{\tr equivalent} network contains GFM dynamics {\tr (as it can be seen from~\eqref{eq:equil_network} that ${\bf Y}_{\rm gridC}(s)$ is computed based on ${\bf Y}_{{\rm C}\{2\}}(s)$)}. The example below shows how Lemma~\ref{lem:Equivalent_systems} enables the analysis of a GFM-GFL hybrid system, as also illustrated in Fig.~\ref{Fig_equivalent_syst}.

\begin{example}[A three-converter system where one converter employs GFM control]
\label{ex:three_C_GFM}

Consider again the three-converter system in Fig.~\ref{Fig_3_converter}, where Converters~1 and~2 employ GFL control while Converter~3 employs GFM control. {\tr The GFM control scheme and parameters are provided in Appendix~\ref{Appenxix:GFL_parameters}}. We firstly partition the system into two subsystems according to Lemma~\ref{lem:Equivalent_systems}, as shown in Fig.~\ref{Fig_equivalent_syst}. It can be seen that subsystem~1 is simply a GFM converter connected to an infinite bus, which we assume to be stable. Hence, we can focus on subsystem~2 to analyze the stability of the original system. In subsystem~2, two GFL converters are connected to an equivalent network whose dynamics is described by~\eqref{eq:equil_network}. Then, we can use Proposition~\ref{prop:Decentralized_stability_conditions} to analyze the stability of subsystem~2, by comparing the converters' dynamics ($\widetilde{\bf Y}_{{\rm C},1}(s)$ and $\widetilde{\bf Y}_{{\rm C},2}(s)$) with the equivalent network's dynamics, i.e.,
\begin{equation}\label{eq:equil_network_1}
    \widetilde{\bf Y}_{\rm gridC}(s) := ({\bf S}^{-\frac{1}{2}} \otimes I_2) {\bf Y}_{\rm gridC}(s) ( {\bf S}^{-\frac{1}{2}} {\bf D} \otimes {\widetilde F}_\epsilon^{-1}(s)),
\end{equation}
which is obtained from~\eqref{eq:equil_network} and~\eqref{eq:Y_C_grid}.

\begin{figure}[!t]
	\centering
	\includegraphics[width=3.4in]{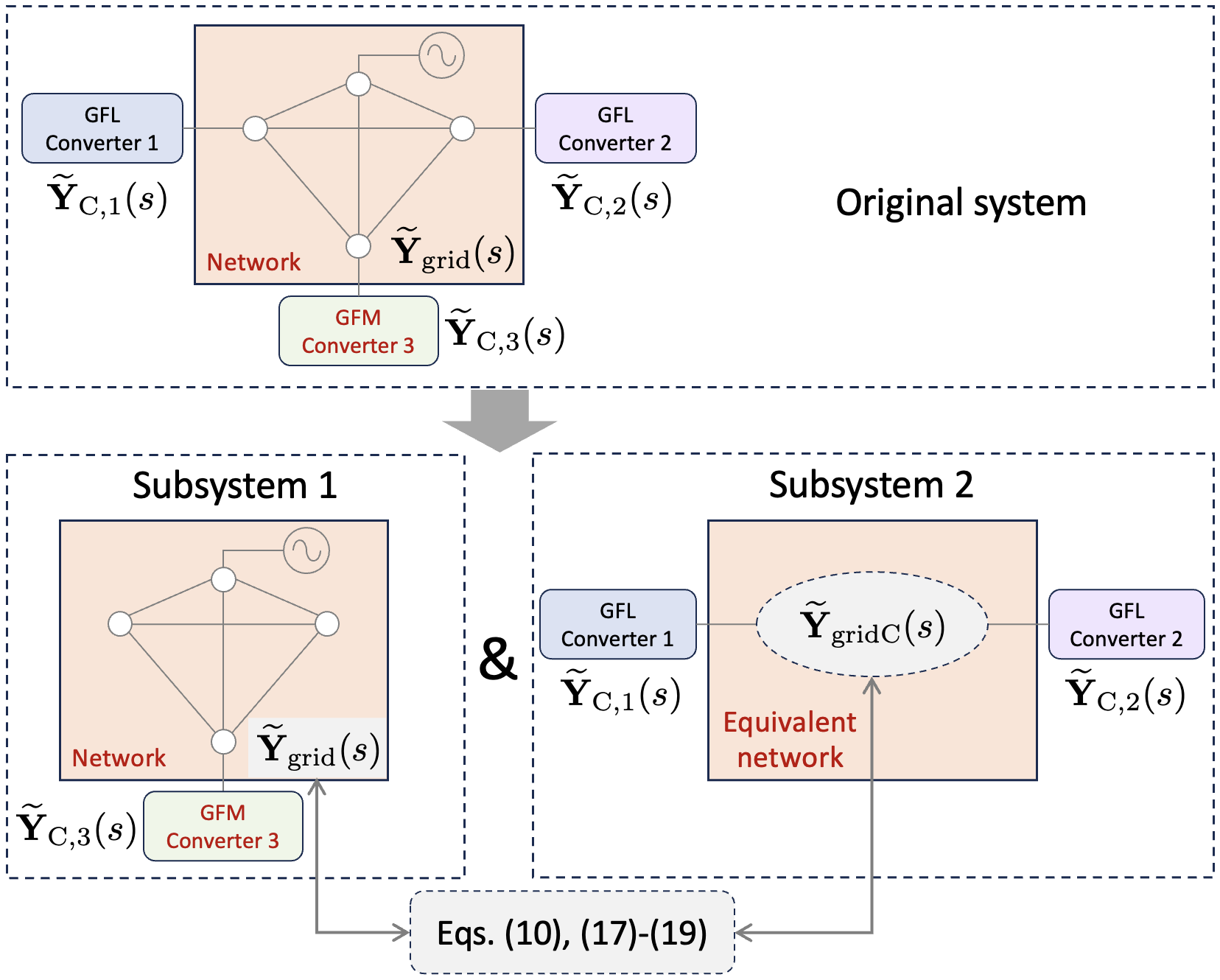}
	\vspace{-3mm}
	\caption{Equivalent subsystems of the three-converter system. {\tr The network in subsystem~1 and the equivalent network in subsystem~2 are related via Eqs.~\eqref{eq:Y_C_grid},~\eqref{eq:equil_network}-\eqref{eq:equil_network_1}.}}
	\vspace{0mm}
	\label{Fig_equivalent_syst}
\end{figure}

\begin{figure}[!t]
	\centering
	\includegraphics[width=3.4in]{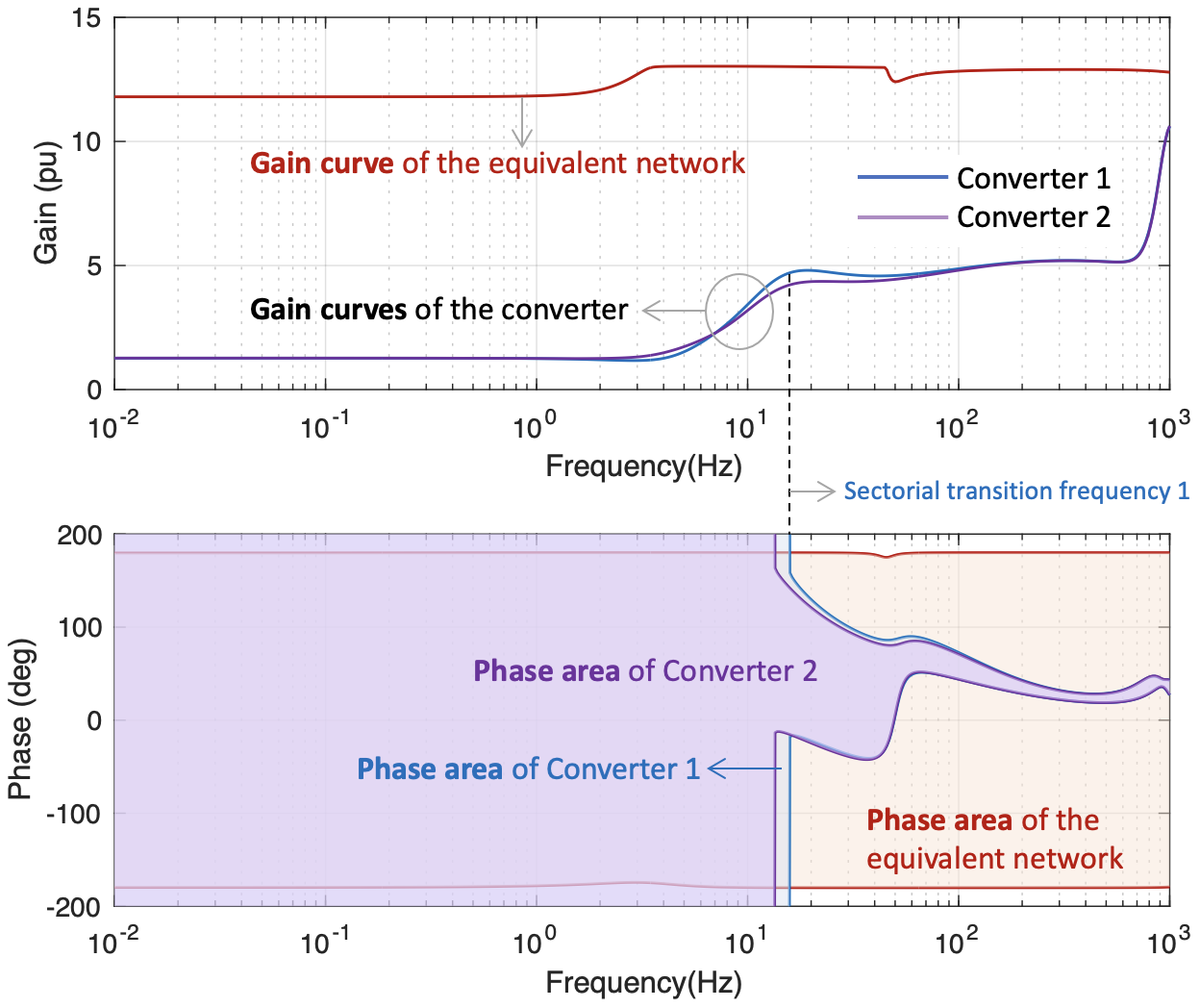}
	\vspace{-4mm}
	\caption{Gain curves and phase areas of subsystem~2.}
	\vspace{-3mm}
	\label{Fig_gain_phase_GFM}
\end{figure}

Fig.~\ref{Fig_gain_phase_GFM} plots the gain curves and phase areas of Converters~1 and~2 as well as the gain curve and phase area of the equivalent network (computed from $\widetilde{\bf Y}_{\rm gridC}(s)$). It shows that the gain curve and phase area of the equivalent network are not constant across all the frequencies, because the GFM dynamics are included in the equivalent network (see~\eqref{eq:equil_network}). Under the same network parameters, we can see by comparing Fig.~\ref{Fig_gain_phase_GFM} and Fig.~\ref{Fig_gain_phase_converter3_1} that incorporating the GFM dynamics into the network part increases the gain curve of the equivalent network, and therefore improves the system stability. This is consistent with existing works showing that the installation of GFM converters improves the small-signal stability of the system~\cite{xin2022many,yang2020placing}. In Fig.~\ref{Fig_gain_phase_GFM}, above the sectorial transition frequency of Converter~1, the phase areas of the two converters are contained in the phase area of the equivalent network, and the union of the two converters' phase areas has a width less than $180^\circ$. Moreover, below the sectorial transition frequency, the gain curves of the two converters are below the gain curve of the equivalent network. Hence, we conclude that the system is stable according to Proposition~\ref{prop:Decentralized_stability_conditions}.
\end{example}

{\tr
\subsection{Extension to SG-GFM-GFL hybrid systems}

The above example studies a GFM-GFL hybrid system, where the effects of SGs are implicitly modeled into the infinite bus for simplicity, corresponding to the cases where the capacities of converters are relatively small. However, with the increasing penetration of renewable energy generators, the effects of SGs should be taken into account as GFM and GFL converters will dynamically interact with SGs in future power systems. Hence, in what follows, we further consider SGs in our approach.

Notice that the previous gain-phase analysis is based on the admittance models of devices (e.g., GFL converters, GFM converters), as shown in Fig.~\ref{Fig_closed_loop} where ${\bf Y}_{{\rm C},i}(s)$ denotes the admittance matrix of the $i$-th device. The admittance matrix of a device is essentially the $2 \times 2$ transfer function matrix that describes how the device's $dq$ current signals respond to the perturbations in the $dq$ voltage signals, namely, the transfer function matrix from voltage inputs to current outputs. Therefore, the dynamics of an SG can also be captured by deriving its admittance model~\cite{huang2020damping}, and we can still use the closed-loop diagram in Fig.~\ref{Fig_closed_loop} to represent the dynamics of an SG-GFM-GFL hybrid system. To be specific, if the $i$-th device is an SG, then ${\bf Y}_{{\rm C},i}(s)$ should be set as the SG's admittance matrix. Moreover, since SGs also have GFM ability, we group them with GFM converters into ${\bf Y}_{{\rm C}\{2\}}(s)$ in~\eqref{eq:subsys1} (i.e., in subsystem~1).

In summary, when dealing with a system that contains SGs, GFM converters, and GFL converters, we group the SGs and GFM converters into ${\bf Y}_{{\rm C}\{2\}}(s)$ and group the GFL converters into ${\bf Y}_{{\rm C}\{1\}}(s)$. According to Lemma~\ref{lem:Equivalent_systems}, the resulting subsystems~1 describes how the SGs and the GFM converters are coupled through the power network, and the resulting subsystems~2 describes how the GFL converters are coupled through an \textit{equivalent} network (see~\eqref{eq:equil_network}-\eqref{eq:equil_network_1}). It can be seen from~\eqref{eq:equil_network} that ${\bf Y}_{\rm gridC}(s)$ is obtained from both ${\bf Y}_{\rm grid}(s)$ and ${\bf Y}_{{\rm C}\{2\}}(s)$, which can be interpreted as merging the GFM dynamics into the network. In Example~\ref{ex:three_C_GFM}, we showed that merging the GFM dynamics into the network is equivalent to increasing the gain of the network, which justifies the advantage of installing GFM converters. Note that in this paper, we need both ${\bf Y}_{\rm grid}(s)$ and ${\bf Y}_{{\rm C}\{2\}}(s)$ to compute ${\bf Y}_{\rm gridC}(s)$ and therefore $\widetilde{\bf Y}_{\rm gridC}(s)$ in~\eqref{eq:equil_network_1}, while we will investigate how to reduce the requirement of ${\bf Y}_{{\rm C}\{2\}}(s)$ in future works (e.g., by using reduced-order models or by approximating GFM devices by voltage sources).

Since subsystem~1 and subsystem~2 may both contain heterogeneous devices, we sequentially use Proposition~\ref{prop:Decentralized_stability_conditions} (or Corollary~\ref{prop:Power_grid_strength} if applicable) to evaluate their stability. According to Lemma~\ref{lem:Equivalent_systems}, the original SG-GFM-GFL hybrid system is stable if both subsystem~1 and subsystem~2 are stable. We provide an example (on a modified 68-bus system) below to illustrate the effectiveness of our approach on SG-GFM-GFL hybrid systems.
}

{\tr

\begin{example}[Stability of a modified 68-bus system integrated with 2 SGs, 4 GFM converters, and 9 GFL converters]
\label{ex:68_bus}

\tr

\begin{figure}[!t]
	\centering
	\includegraphics[width=3.4in]{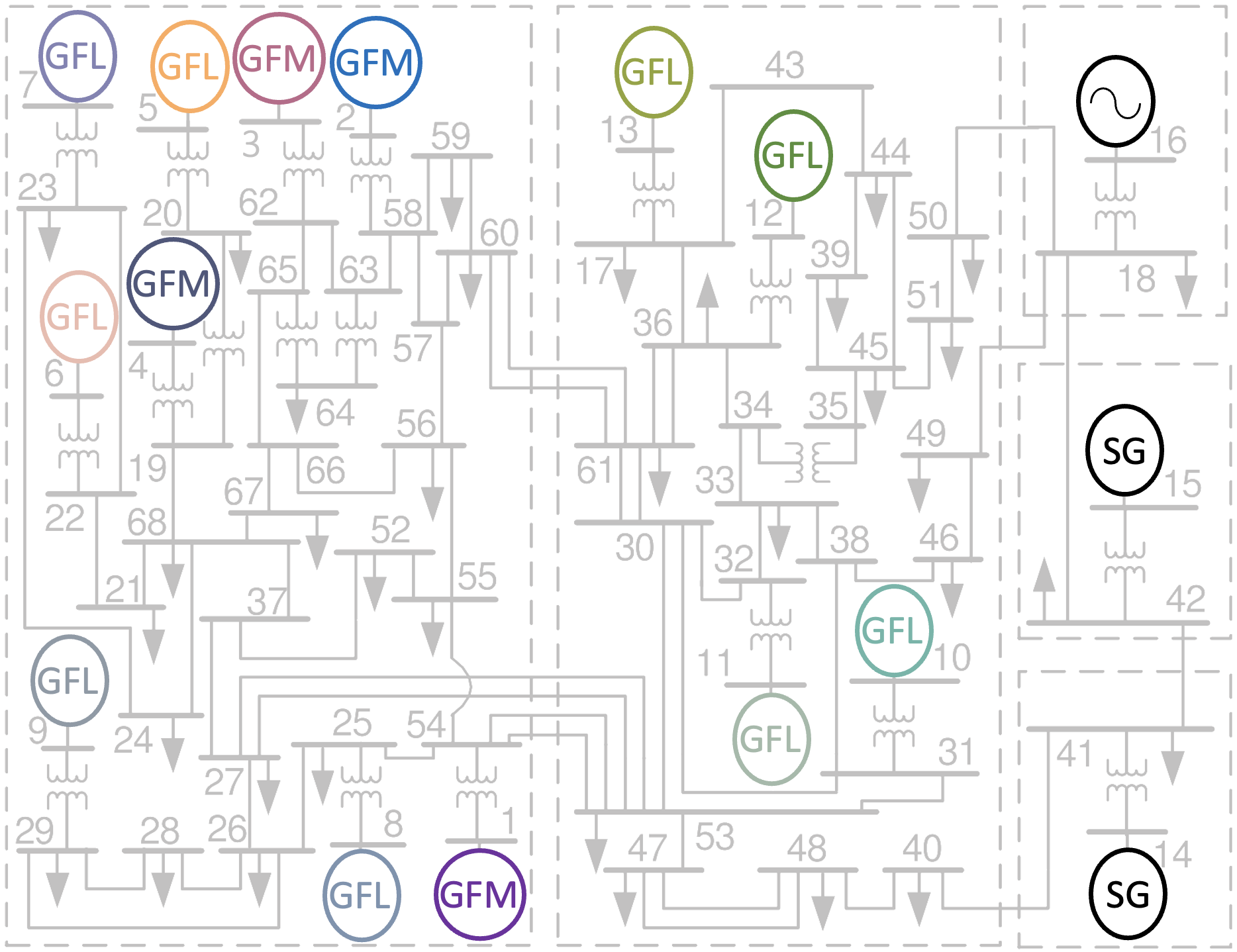}
	\vspace{-3mm}
	\caption{\tr A modified 68-bus system integrated with 2 SGs, 4 GFM converters, and 9 GFL converters. (Note that the original setting of the IEEE 68-bus system can be found in~\cite{canizares2016benchmark}.)}
	\vspace{0mm}
	\label{Fig_68_bus}
\end{figure}

\tr

Consider a modified IEEE 68-bus system as shown in Fig.~\ref{Fig_68_bus}, where four GFM converters are connected to Buses~1~$\sim$~4 (denoted by GFM converters~1~$\sim$~4), nine GFL converters are connected to Buses~5~$\sim$~13 (denoted by GFL converters~5~$\sim$~13), and two SGs are connected to Buses~14 and 15 (denoted by SGs~14 and 15). Bus~16 of the systems is modeled as an infinite bus to emulate a remote (weakly-connected) area. The topology of the power network and load profiles are the same as the original 68-bus systems in~\cite{canizares2016benchmark}. The Pi-model is used for the transmission lines, and we allow different R/X ratios for different transmission lines, aligned with practical settings. The detailed parameters of this modified IEEE 68-bus system can be found in Appendix~\ref{Appenxix:GFL_parameters}. Note that this system has heterogeneous R/X ratios, and we choose $\widetilde \epsilon = 0.1$ in ${\widetilde F}_\epsilon(s)$ when splitting the system dynamics, which is the R/X ratio of all the converters' grid-side impedance. Moreover, we further consider voltage normalization and rated frequency feed-forward in the PLL design in the GFL converters~\cite{golestan2016three}.

Note that it is usually difficult to analyze the stability of such a complex system, due to its high dimension and the complicated coupling among heterogeneous devices (SGs, GFM and GFL converters). Nonetheless, our approach can be applied to gain insights into how the devices interact with the network and how to improve the overall system stability via device-level control design. Moreover, as will be shown below, our approach can be used to find out which devices cause instabilities and which devices should be re-designed.

\tr

Following the previous discussions in this subsection, the first step is to partition the system into two subsystems according to Lemma~\ref{lem:Equivalent_systems}, as shown in Fig.~\ref{Fig_68_equivalent}. It can be seen that the four GFM converters and the two SGs are connected to the original network in subsystem~1, while the nine GFL converters are connected to the equivalent network in subsystem~2. 

\tr

\begin{figure}[!t]
	\centering
	\includegraphics[width=3.4in]{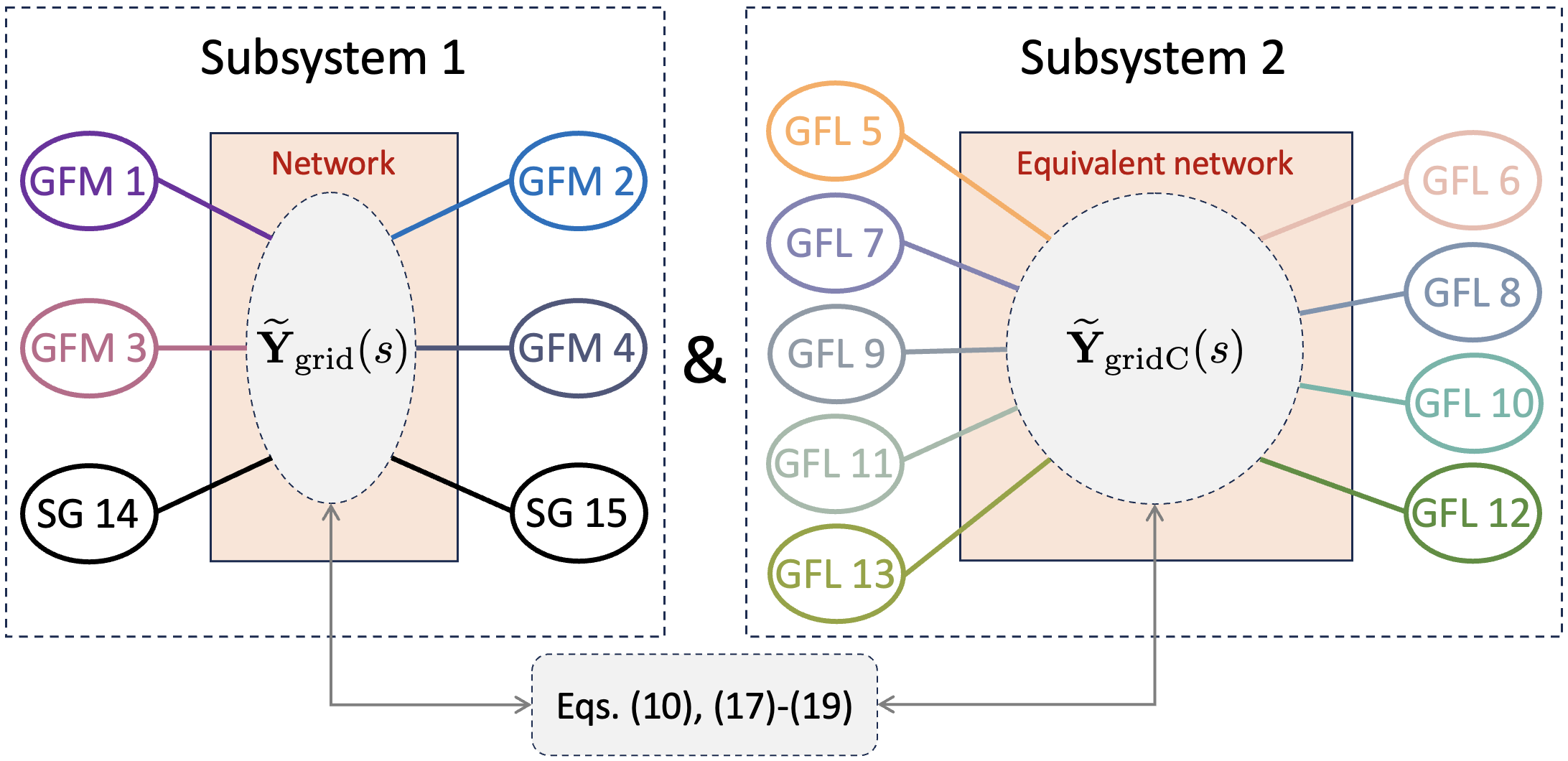}
	\vspace{-3mm}
	\caption{\tr Equivalent subsystems of the modified 68-bus system. The network in subsystem~1 and the equivalent network in subsystem~2 are related via Eqs.~\eqref{eq:Y_C_grid},~\eqref{eq:equil_network}-\eqref{eq:equil_network_1}.}
	\vspace{0mm}
	\label{Fig_68_equivalent}
\end{figure}

We next analyze the stability of subsystem~1. Fig.~\ref{Fig_68_GFM_SG} plots the gain curves and phase areas of the four GFM converters and the two SGs as well as the gain curve and phase area of
the network (the GFL converter nodes are eliminated through Kron reduction as subsystem~1 does not contain GFL converters). The GFM converters 2, 3, and 4 employ the same design and parameters so their gain curves and phase areas are the same. The two SGs have the same parameters so their gain curves and phase areas are the same. It can be seen that above 0.77~Hz, the phase areas of all the GFM converters and the SGs are contained in the phase area of the network, while below 0.77~Hz, the gain curves of all the GFM converters and the SGs are below the gain curve of the network. Hence, according to Proposition~\ref{prop:Decentralized_stability_conditions}, subsystem~1 is stable.

\tr

\begin{figure}[!t]
	\centering
	\includegraphics[width=3.4in]{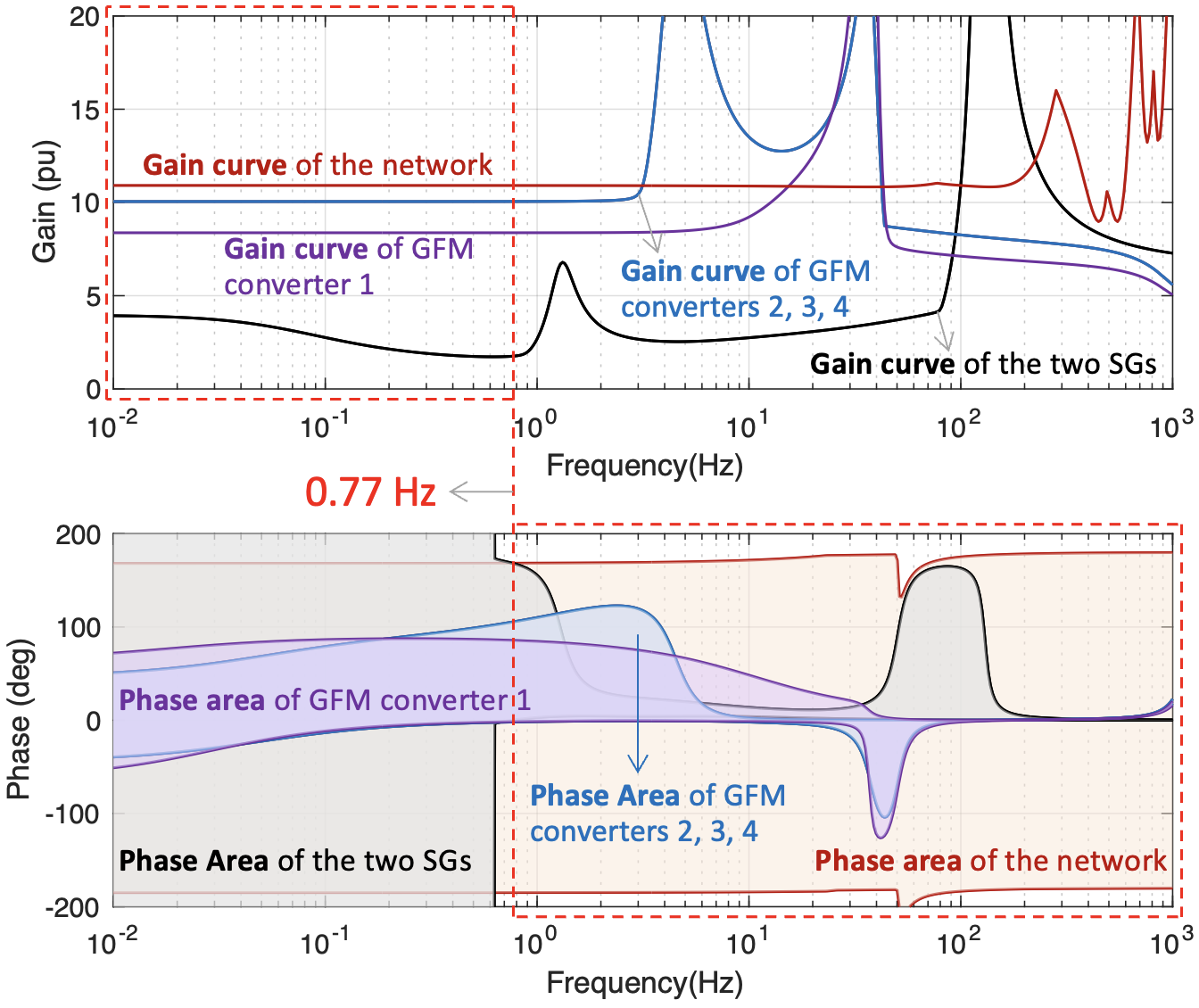}
	\vspace{-3mm}
	\caption{\tr Gain curves and phase areas of subsystem 1 of the modified 68-bus system, where the \textit{original} network is focused.}
	\vspace{0mm}
	\label{Fig_68_GFM_SG}
\end{figure}

\tr

Then, we analyze the stability of subsystem~2. To illustrate the applicability of our approach in analyzing heterogeneous dynamics, we set the PLL bandwidths of GFL converters~5~$\sim$~13 to be respectively 10~rad/s, 15~rad/s, 20~rad/s, 25~rad/s, 30~rad/s, 35~rad/s, 40~rad/s, 60~rad/s, 70~rad/s, that is, all the GFL converters have different dynamics. 

Fig.~\ref{Fig_68_unstable} plots the gain curves and phase areas of the nine GFL converters as well as the gain curve and phase area of the
equivalent network. A zoom-in version of Fig.~\ref{Fig_68_unstable} is provided in Fig.~\ref{Fig_68_unstable_zoomin}~(a). 
It can be seen that GFL converter~13 has the highest sectorial transition frequency, above which the phase areas of all the GFL converters are contained in the phase area of the equivalent network, and moreover, the union of all the converters' phase areas has a width less than $180^\circ$. Hence, condition ii) of Proposition~\ref{prop:Decentralized_stability_conditions} is satisfied above GFL converter~13's sectorial transition frequency (14.1~Hz). However, all the converters' gain curves are below the gain curve of the network below 10.2~Hz, where condition i) of Proposition~\ref{prop:Decentralized_stability_conditions} is satisfied. Between 10.2~Hz and 14.1~Hz (the gray area in Fig.~\ref{Fig_68_unstable}), neither condition i) nor ii) of Proposition~\ref{prop:Decentralized_stability_conditions} is satisfied, and one cannot conclude that the closed-loop system is stable according to the decentralized stability conditions. As will be validated in the simulation section, the modified 68-bus system is unstable in this case with the oscillation frequency being 11.3~Hz, lying exactly in the gray area.

It can be observed from Fig.~\ref{Fig_68_unstable} and Fig.~\ref{Fig_68_unstable_zoomin}~(a) that the gains of GFL converters 12 and 13 are too high, which leads to the gray area in Fig.~\ref{Fig_68_unstable}. One can further deduce that the system could have been stable if GFL converters 12 and 13 employed a better design. For instance, Fig.~\ref{Fig_68_unstable_zoomin}~(b) plots the new gain curves and phase areas of GFL converters 12 and 13 when their PLL bandwidths are respectively reduced from 60~rad/s and 70~rad/s to 5~rad/s (see the red dash lines). After reducing the PLL bandwidths, GFL converter 11 has the highest sectorial transition frequency (12.9~Hz), above which the phase areas of all the GFL converters are contained in the phase area of the equivalent network. Below this frequency, the gain curves of all the GFL converters are below the gain curve of the equivalent network. Therefore, according to Proposition~\ref{prop:Decentralized_stability_conditions}, subsystem~2 is stable. Since subsystem~1 is also stable, according to Lemma~\ref{lem:Equivalent_systems}, we conclude that the modified 68-bus system is stable in this scenario (i.e., with reduced PLL bandwidths in Fig.~\ref{Fig_68_unstable_zoomin}~(b)).

\begin{figure}[!t]
	\centering
	\includegraphics[width=3.4in]{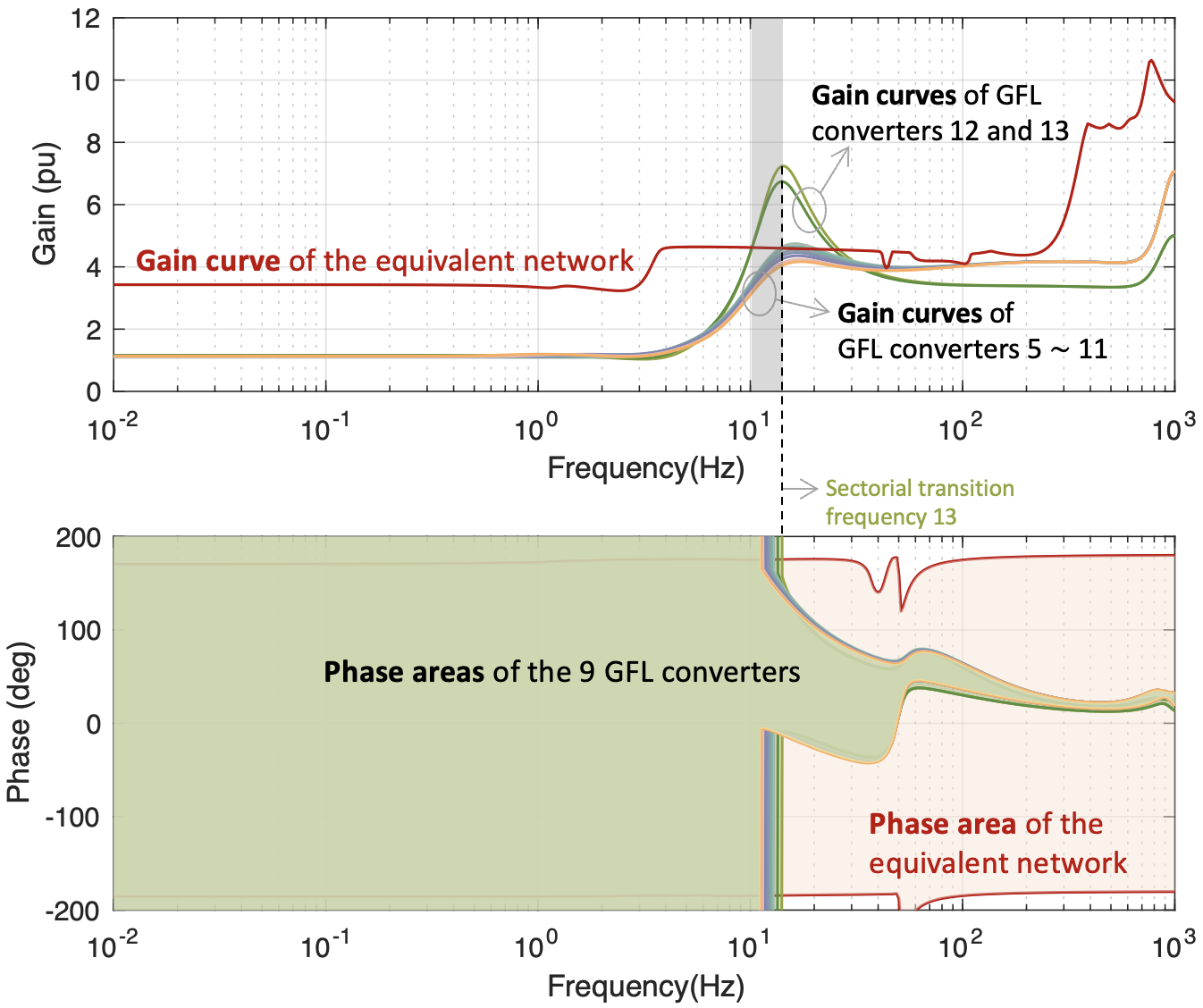}
	\vspace{-3mm}
	\caption{\tr Gain curves and phase areas of subsystem 2 of the modified 68-bus system, where the \textit{equivalent} network is focused.}
	\vspace{0mm}
	\label{Fig_68_unstable}
\end{figure}

\begin{figure}[!t]
	\centering
	\includegraphics[width=3.4in]{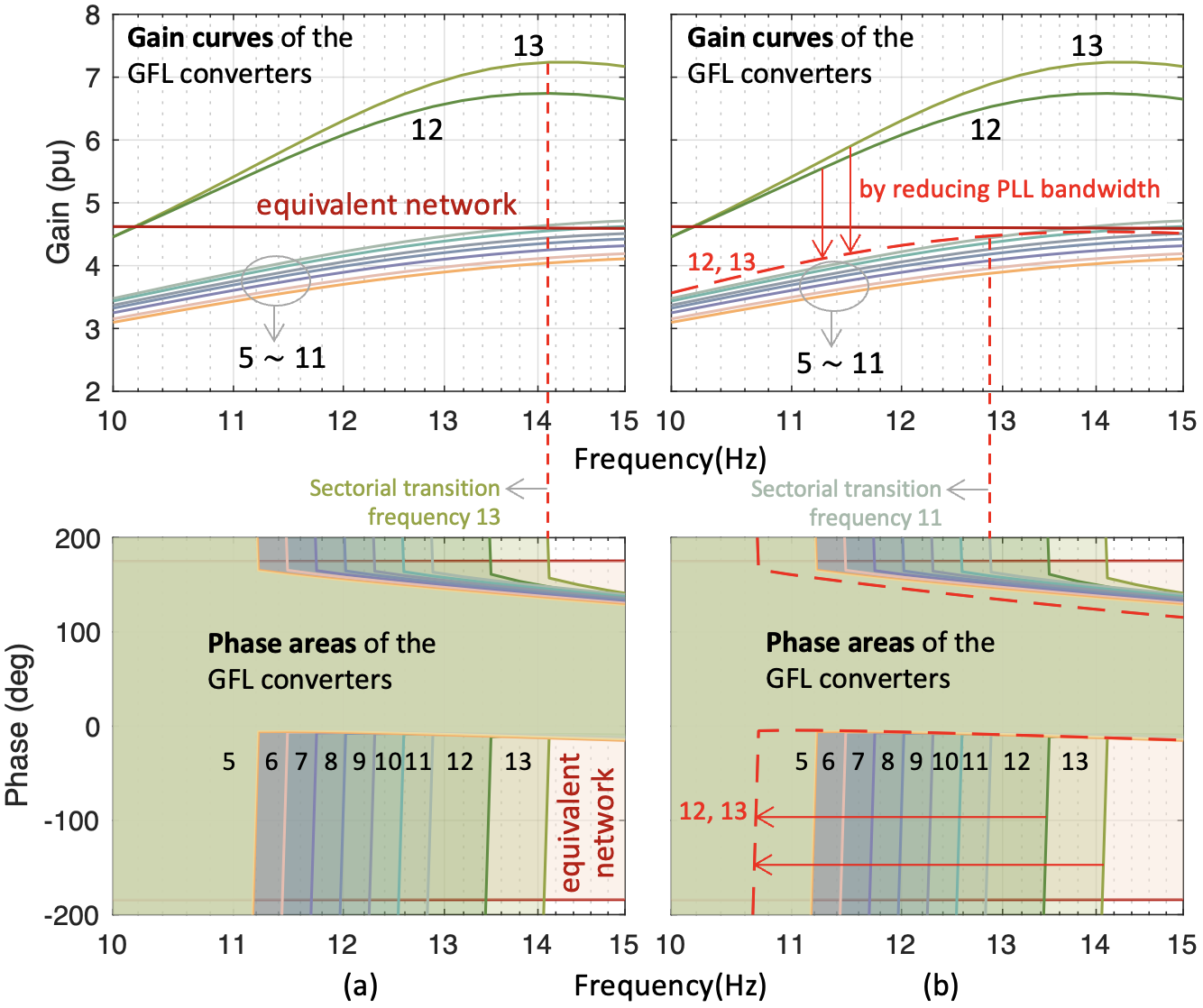}
	\vspace{-3mm}
	\caption{\tr Gain curves and phase areas of subsystem 2: (a) a zoom-in version of Fig.~\ref{Fig_68_unstable}, and (b) reducing the PLL bandwidths of GFL converters 12 and 13 to 5~rad/s (the new gain curves and phase areas of GFL converters 12 and 13 are plotted using red dash lines).}
	\vspace{0mm}
	\label{Fig_68_unstable_zoomin}
\end{figure}

\end{example}
}

        \section{Simulations results}
	
In what follows, we {\tr use high-fidelity power system models (including nonlinear dynamics) to} provide time-domain simulation results and verify the previous analysis that uses the decentralized stability conditions. 

\subsection{Time-domain simulation results of Example~\ref{ex:single_C}}

Fig.~\ref{Fig_sim_1} shows the responses of the single-converter system in Fig.~\ref{Fig_single_converter}. At $t=0.5~{\rm s}$, the network reactance $X_{\rm net}$ steps from $0.1~{\rm pu}$ to respectively $0.2~{\rm pu}$ (black line), $0.25~{\rm pu}$ (yellow line), and $0.295~{\rm pu}$ (blue line). It can be seen that the system is stable with $X_{\rm net} = 0.2~{\rm pu}$, consistent with the analysis in Example~\ref{ex:single_C}. The gain-phase analysis in Fig.~\ref{Fig_gain_phase_single1} gives a sufficient stability condition indicating that the system is stable if the gain of the network is larger than or equal to 4, i.e., $X_{\rm net} \le 0.25~{\rm pu}$, which is verified by the yellow line in Fig.~\ref{Fig_sim_1}. The blue line in Fig.~\ref{Fig_sim_1} shows that the system is marginally stable with $X_{\rm net} = 0.295~{\rm pu}$. The gap between 0.25 and 0.295 is due to the fact that our approach gives sufficient stability conditions. However, the above result shows that the conservatism is acceptable and reasonable, as the gap between the marginal value obtained from our approach (i.e., 0.25) and the true marginal value (i.e., 0.295) is only 0.045.

\begin{figure}[!t]
	\centering
	\includegraphics[width=3.1in]{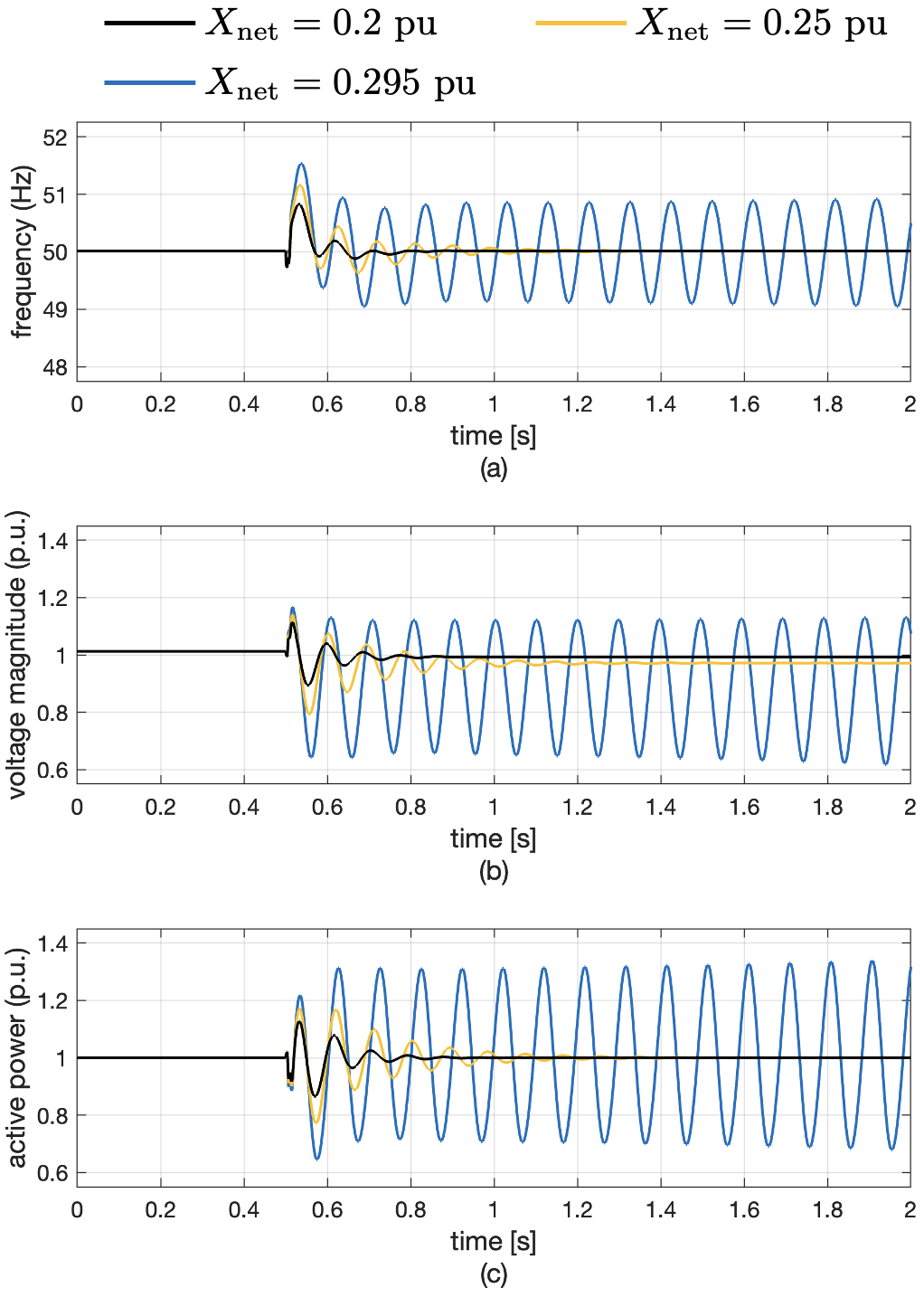}
	\vspace{-4mm}
	\caption{Time-domain responses of the single-converter system in Fig.~\ref{Fig_single_converter}: (a)~frequency, (b)~voltage magnitude, and (c)~active power.}
	\vspace{-1mm}
	\label{Fig_sim_1}
\end{figure}

\subsection{Time-domain simulation results of Examples~\ref{ex:three_C} and~\ref{ex:three_C_GFM}}

Fig.~\ref{Fig_sim_2} shows the time-domain responses of the three-converter system in Fig.~\ref{Fig_3_converter}, in which the gSCR of the system changes from 6 to 4.82 at $t=0.5~{\rm s}$ to emulate changes in the power network (e.g., a transmission line is opened).
Fig.~\ref{Fig_sim_2}~(a) corresponds to the setting in Example~\ref{ex:three_C} (i.e., all the converters use GFL control) and shows that the system is unstable (close to marginally stable) with the oscillation frequency being 15~Hz, consistent with the analysis in Fig.~\ref{Fig_gain_phase_converter3_1}. Fig.~\ref{Fig_sim_2}~(b) corresponds to the setting in Example~\ref{ex:three_C_GFM} where Converter~3 adopts GFM control, which shows that the system is stable, consistent with the analysis in Fig.~\ref{Fig_gain_phase_GFM}. By comparing Fig.~\ref{Fig_sim_2}~(a) with (b), we can deduce that the system stability is improved by changing Converter~3 from GFL control to GFM control, in line with the analysis in Example~\ref{ex:three_C_GFM} showing that operating Converter~3 in GFM mode increases the gain (or gSCR) of the equivalent network.

{\tr
We further test the performance of the three-converter system when it is not at a steady state due to, for instance, the fluctuation of renewable energy. Fig.~\ref{Fig_sim_4} shows the time-domain responses of the three-converter system under active power fluctuations (by adding stochastic signals to the active power references of the converters to emulate such effects). The other settings are the same as the settings of Fig.~\ref{Fig_sim_2}. It can be seen from Fig.~\ref{Fig_sim_4} that the stability results are the same as those in Fig.~\ref{Fig_sim_2}, namely, the system is unstable when all the converters employ GFL control while it is stable by changing Converter~3 from GFL control to GFM control. Moreover, compared with Fig.~\ref{Fig_sim_2}~(a), the oscillations in Fig.~\ref{Fig_sim_4}~(a) are further amplified due to the active power fluctuations. Overall, the above time-domain simulation results are aligned with the gain and phase analysis provided in Examples~\ref{ex:three_C} and~\ref{ex:three_C_GFM}.

\begin{figure}[!t]
	\centering
	\includegraphics[width=2.7in]{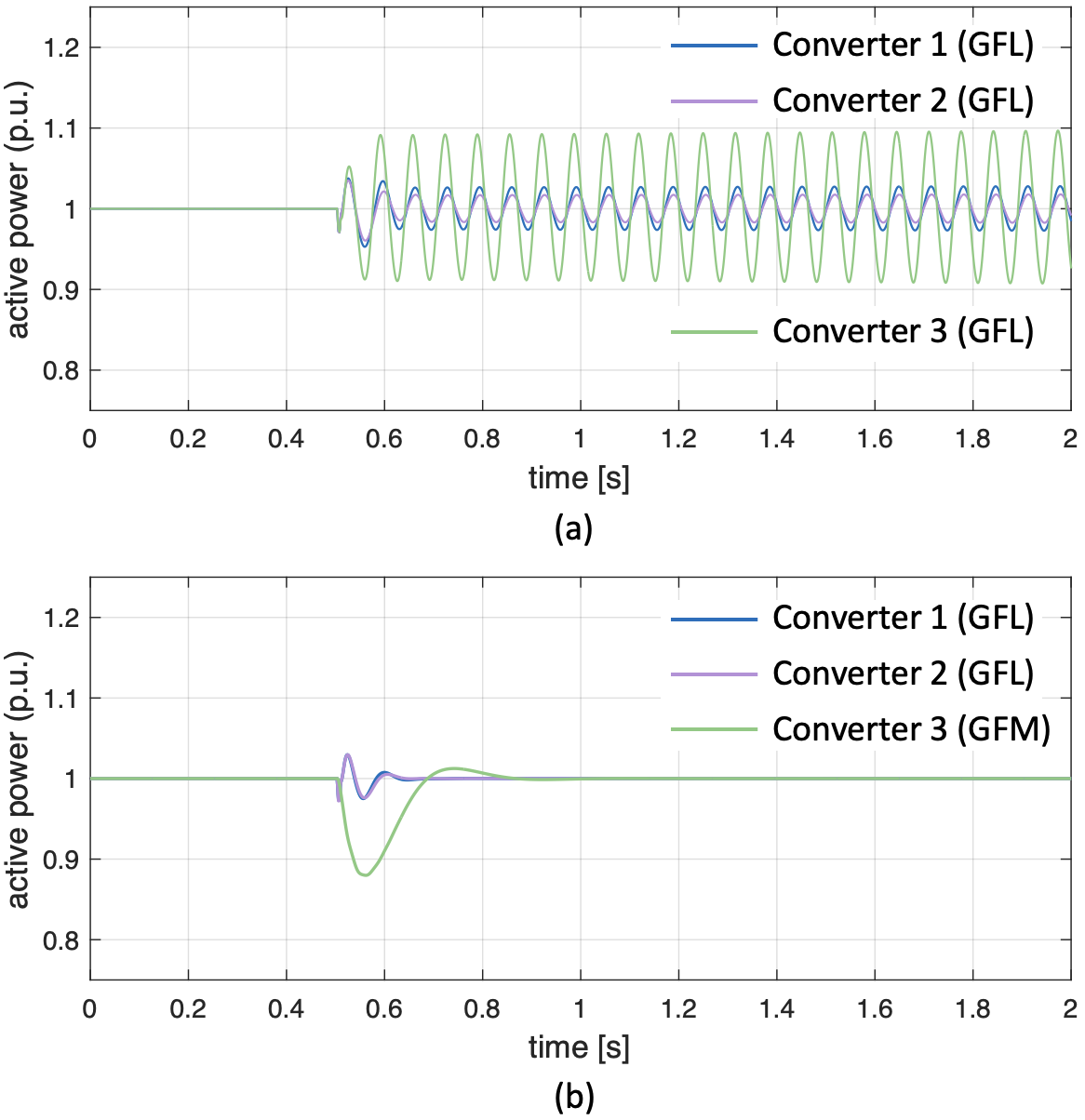}
	\vspace{-3mm}
	\caption{Time-domain responses of the three-converter system in Fig.~\ref{Fig_3_converter}: (a)~all the converters employ GFL control (corresponding to Example~\ref{ex:three_C}), and (b)~Converters~1 and~2 employ GFL control and Converter~3 employs GFM control (corresponding to Example~\ref{ex:three_C_GFM}).}
	\vspace{0mm}
	\label{Fig_sim_2}
\end{figure}

\begin{figure}[!t]
	\centering
	\includegraphics[width=2.7in]{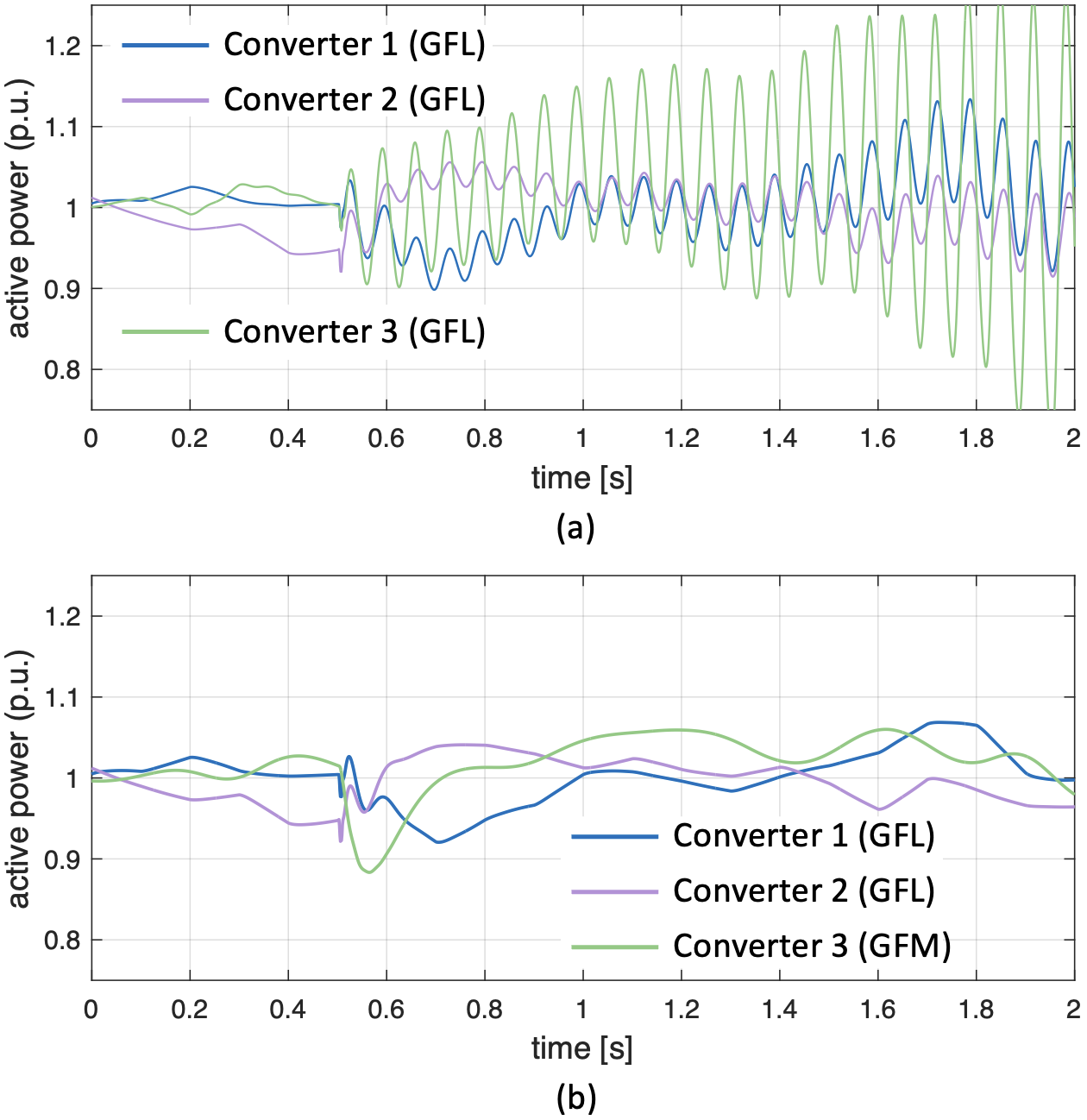}
	\vspace{-3mm}
	\caption{\tr Time-domain responses of the three-converter system in Fig.~\ref{Fig_3_converter} under active power fluctuations: (a)~all the converters employ GFL control (corresponding to Example~\ref{ex:three_C}), and (b)~Converters~1 and~2 employ GFL control and Converter~3 employs GFM control (corresponding to Example~\ref{ex:three_C_GFM}).}
	\vspace{-2mm}
	\label{Fig_sim_4}
\end{figure}

}

{\tr
\subsection{Time-domain simulation results of Example~\ref{ex:68_bus}}

\tr

Fig.~\ref{Fig_68_timedomain}~(a) shows the time-domain responses of the 68-bus system in Fig.~\ref{Fig_68_bus}, where the grid-side impedance of GFL converter~13 steps from $0.013+j0.13~{\rm pu}$ to $0.026+j0.26~{\rm pu}$ at $t=0.2~{\rm s}$ to emulate, e.g., the tripping of one (out of two) transmission line. Note that after $t=0.2~{\rm s}$, the setting of the system is fully aligned with Fig.~\ref{Fig_68_unstable} and Fig.~\ref{Fig_68_unstable_zoomin}~(a) in Example~\ref{ex:68_bus}. It can be seen from Fig.~\ref{Fig_68_timedomain}~(a) that the 68-bus system is unstable under such a setting -- the frequency, voltage magnitude, and active power are all oscillating. 
This is consistent with the analysis in Example~\ref{ex:68_bus} that the high gains of GFL converters~12 and~13 result in the gray area in Fig.~\ref{Fig_68_unstable} where the proposed conditions cannot be satisfied. Moreover, it can be observed that the oscillation frequency is 11.3~Hz, lying exactly in the gray area.

\tr

To alleviate the high gains of GFL converters~12 and~13, we reduce their PLL bandwidths from 60~rad/s and 70~rad/s to 5~rad/s, corresponding to the setting in Fig.~\ref{Fig_68_unstable_zoomin}~(b). Fig.~\ref{Fig_68_timedomain}~(b) shows the time-domain responses of the 68-bus system in Fig.~\ref{Fig_68_bus} under such a setting (the same disturbance occurs at $t=0.2~{\rm s}$). Note that after $t=0.2~{\rm s}$, the setting of the system is fully aligned with Fig.~\ref{Fig_68_unstable_zoomin}~(b) in Example~\ref{ex:68_bus}.
It can be seen that the system is stable, consistent with the results in Fig.~\ref{Fig_68_unstable_zoomin}~(b) showing that the proposed conditions are satisfied at any frequency (after reducing the PLL bandwidths). In summary, the above time-domain simulation results validate the effectiveness of the analysis in the previous examples.

\tr
\begin{figure}[!t]
	\centering
	\includegraphics[width=3.43in]{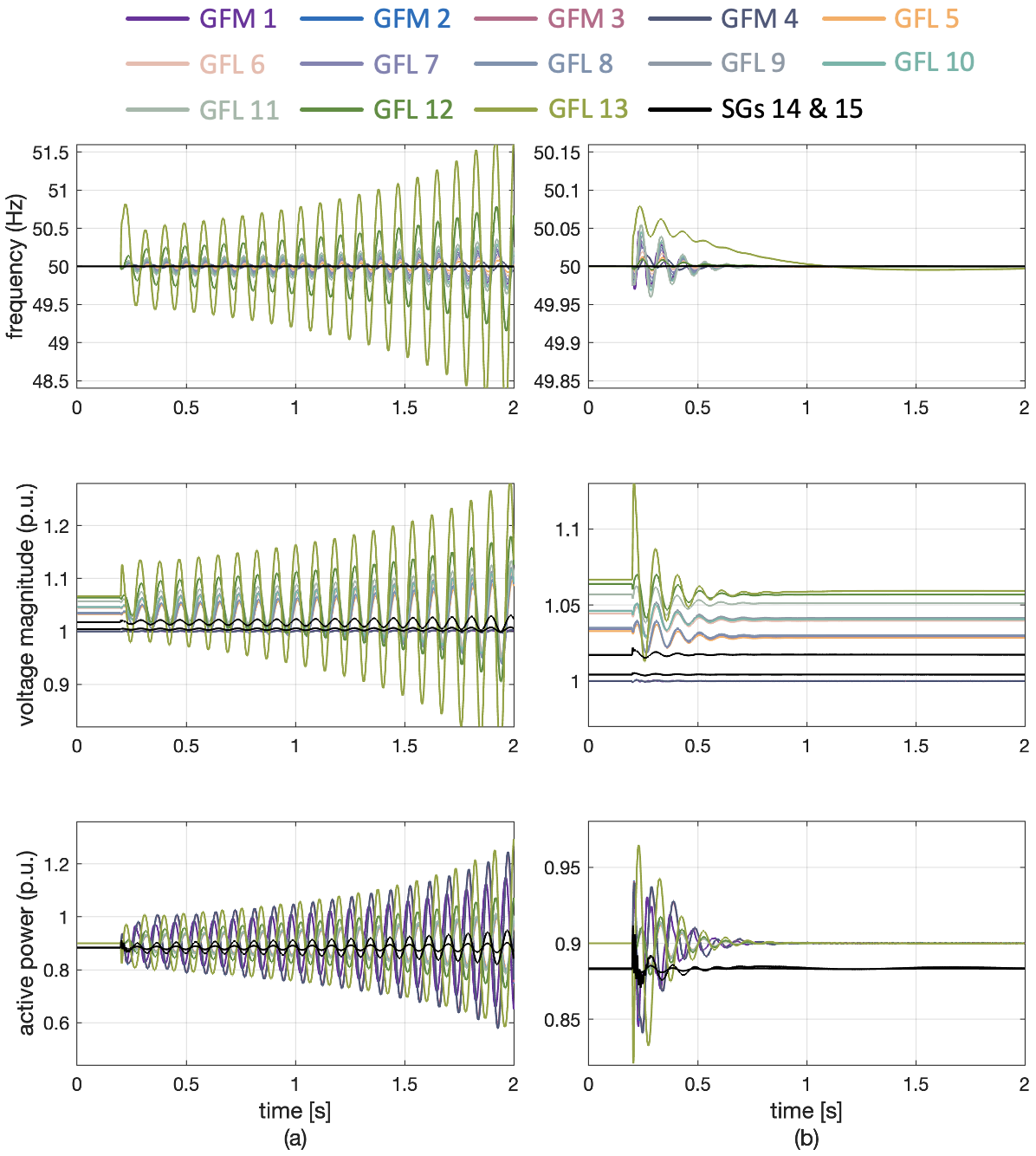}
	\vspace{-3mm}
	\caption{\tr Time-domain responses of the 68-bus system: (a)~using the setting in Fig.~\ref{Fig_68_unstable} and Fig.~\ref{Fig_68_unstable_zoomin}-a; (b)~using the setting in Fig.~\ref{Fig_68_unstable_zoomin}-b where the PLL bandwidths of GFL converters 12 and 13 are reduced.}
	\vspace{0mm}
	\label{Fig_68_timedomain}
\end{figure}



}

        \section{Comparison with Passivity-Based Analysis and Small Gain Analysis}

\begin{figure}[!t]
	\centering
	\includegraphics[width=3in]{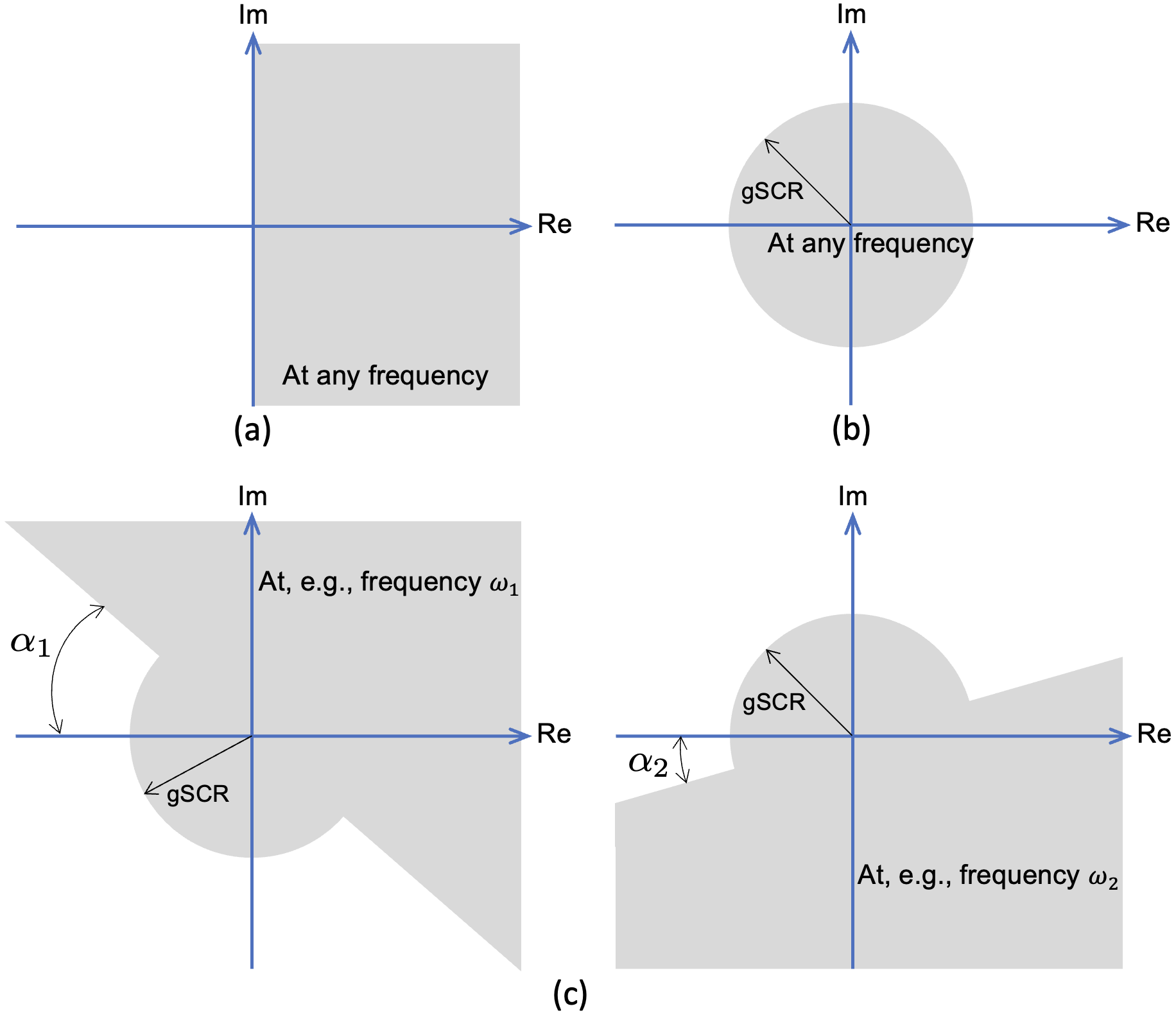}
	\vspace{-3mm}
	\caption{Comparison between passivity-based condition, small gain condition, and Corollary~\ref{prop:Power_grid_strength}: (a)~admissible region of passivity, (b)~admissible region of small gain condition, and (c)~admissible region of Corollary~\ref{prop:Power_grid_strength}.}
	\vspace{-4mm}
	\label{Fig_compare_passivity}
\end{figure}

    Passivity-based analysis and small gain theorem-based analysis also provide decentralized stability conditions for multi-converter systems~\cite{harnefors2015passivity, huang2020h}. The passivity-based condition requires that each converter is passive, which is equivalent to constraining the phases of $\widetilde{\bf Y}_{{\rm C},i}(j\omega)$ within $[-90^\circ,+90^\circ]$ \textit{at any frequency}, as illustrated in the grey region of Fig.~\ref{Fig_compare_passivity}~(a). The small gain condition instead requires that the gains of $\widetilde{\bf Y}_{{\rm C},i}(j\omega)$ are within the grey circle in Fig.~\ref{Fig_compare_passivity}~(b) \textit{at any frequency}.
    However, such conditions may not be satisfied due to its conservatism. By comparison, our conditions in this paper are much less conservative. For instance, Corollary~\ref{prop:Power_grid_strength} requires that \textit{at a certain frequency}, either the gain of $\widetilde{\bf Y}_{{\rm C},i}(j\omega)$ is less than gSCR or its phases are contained in a half-plane with a certain angle (e.g., at frequency $\omega_1$, the angle of the half-plane can be $\alpha_1$, while at frequency $\omega_2$, the angle can be $\alpha_2$, as shown in Fig.~\ref{Fig_compare_passivity}~(c)). It can be seen from Fig.~\ref{Fig_compare_passivity} that our conditions provide more flexibility and are therefore less conservative than conventional passivity-based analysis or small gain theorem-based analysis.
	
        \section{Conclusions}
	
    This paper proposed decentralized stability conditions for multi-converter systems, which evaluate the closed-loop stability by comparing the gain curves and phase areas of the converters with the gain curve and phase area of the power network. {\tr The proposed conditions can be checked in different converters in a decentralized manner and thus can be used in large-scale heterogeneous multi-converter systems. Note that in this paper, we tested our approach in a 9-bus system and a 68-bus system, while it can also be applied to larger systems, as the conditions are checked by independently plotting the network dynamics and the converters' dynamics in terms of gain and phase.}
    Moreover, we found that to reduce conservatism, the GFM dynamics should be included in the network dynamics when splitting the GFM-GFL hybrid multi-converter system. This is consistent with the intuition that GFM converters behave as voltage sources and enhance the power grid strength (e.g., gSCR), which can be considered as a dominant part of the grid.
    Our approach intuitively displays the interaction between the converters' dynamics and the network dynamics and gives guidelines on how the converters' characteristics should match the network dynamics to enforce stability, which is useful in the control design of converters to ensure interoperability. 
    
    Future works should include {\tr approximation methods to reduce the required knowledge in the proposed decentralized conditions as well as} systematic control design methods based on the decentralized gain and phase conditions. {\tr Recent advances in system analysis have shown that both the small gain theorem and the small phase theorem have the potential to analyze the large signal stability problems of nonlinear systems, e.g., transient stability problems in power systems, which can be future work. Moreover, we have observed that the sectorial transition frequency of converters plays a significant role in the stability performance of the system, and future work can also include an in-depth analysis of how this frequency is affected by, for instance, parameters and control structure of converters.}

        \normalem
        \bibliographystyle{IEEEtran}
	\bibliography{references}

%
%

        
        \vspace{15mm}
        
	\appendices

        \section{Proof of Proposition~\ref{prop:Decentralized_stability_conditions}}
        \label{Append_Proof1}

        \begin{proof}
It can be deduced from~\eqref{eq:admittance_devices} and~\eqref{eq:Y_grid} that the open-loop transfer function of the multi-converter system is 
\begin{equation*}
\begin{split}
L(s) =& \; ({\bf S}\otimes I_2) e^{J {\vartheta}} \; {\bf Y}_{\rm C}^N(s) \; e^{-J {\vartheta}} \; {\bf Y}^{-1}_{\rm grid}(s) \\
=& \; ({\bf S}^\frac{1}{2} \otimes I_2) e^{J {\vartheta}} \; {\bf Y}_{\rm C}^N(s) \; e^{-J {\vartheta}} \; ({\bf S}^\frac{1}{2} \otimes I_2) {\bf Y}^{-1}_{\rm grid}(s) \\
=& \; ({\bf S}^\frac{1}{2} \otimes I_2) e^{J {\vartheta}} \; {\bf Y}_{\rm C}^N(s) \; e^{-J {\vartheta}} \; ({\bf D} \otimes {\widetilde F}_\epsilon^{-1}(s)) \\
& \; ({\bf D}^{-1}{\bf S}^\frac{1}{2} \otimes {\widetilde F}_\epsilon(s)) {\bf Y}^{-1}_{\rm grid}(s). \\
\end{split}
\end{equation*}
Hence, the closed-loop system dynamics can be described by $[({\bf S}\otimes I_2)e^{J {\vartheta}} \; {\bf Y}_{\rm C}^N(s) \; e^{-J {\vartheta}}]\; \# \;  {\bf Y}^{-1}_{\rm grid}(s)$, or equivalently, by~\eqref{eq:rescaled_sys}.
For such an interconnected system,
Condition i) in Lemma~\ref{small-gain-phase} requires that at the given frequency $\omega\in [0,\infty]$, 
\begin{equation}\label{eq:singular_value1}
\overline \sigma(e^{J {\vartheta}} \; {\bf Y}_{\rm C}^N(j\omega) \; e^{-J {\vartheta}} ({\bf D} \otimes {\widetilde F}_\epsilon^{-1}(j\omega)))  \overline \sigma( \widetilde{\bf Y}_{\rm grid}^{-1}(j\omega) ) < 1.
\end{equation}
Notice that $e^{J {\vartheta}} \; {\bf Y}_{\rm C}^N(j\omega) \; e^{-J {\vartheta}} ({\bf D} \otimes {\widetilde F}_\epsilon^{-1}(j\omega))$ is a block-diagonal matrix (with $N$ $2 \times 2$ blocks) with the $i$-th block being $D_i e^{J\theta_i} {\bf Y}_{{\rm C},i}(j\omega) e^{-J\theta_i} {\widetilde F}_\epsilon^{-1}(j\omega)$. Hence, we have
\begin{equation*}
\begin{split}
& \overline \sigma(e^{J {\vartheta}} \; {\bf Y}_{\rm C}^N(j\omega) \; e^{-J {\vartheta}} ({\bf D} \otimes {\widetilde F}_\epsilon^{-1}(j\omega))) \\
= \; & \max\limits_{i} \; \overline \sigma(D_i e^{J\theta_i} {\bf Y}_{{\rm C},i}(j\omega) e^{-J\theta_i} {\widetilde F}_\epsilon^{-1}(j\omega)) \\
= \; & \max\limits_{i} \; \overline \sigma(D_i e^{J\theta_i} {\bf Y}_{{\rm C},i}(j\omega) {\widetilde F}_\epsilon^{-1}(j\omega) e^{-J\theta_i} ) \\
= \; & \max\limits_{i} \; \overline \sigma( D_i {\bf Y}_{{\rm C},i}(j\omega) {\widetilde F}_\epsilon^{-1}(j\omega) ) 
= \;  \max\limits_{i} \; \overline \sigma( \widetilde{\bf Y}_{{\rm C},i}(j\omega) ),
\end{split}
\end{equation*}
where the first equality holds thanks to the block-diagonal structure, the second because of the fact that ${\widetilde F}_\epsilon^{-1}(j\omega)$ and $e^{-J\theta_i}$ have the special structure $\scriptsize \begin{bmatrix} a & -b \\ b & a \end{bmatrix}$ and thus their multiplication is commutative, the third equality holds since $e^{J\theta_i}$ is a unitary matrix, and the fourth equality follows from the definition of $\widetilde{\bf Y}_{{\rm C},i}(s)$ in~\eqref{eq:Y_C_convert}. Then, we rewrite~\eqref{eq:singular_value1} as
\begin{equation*}
\begin{split}
& \left[ \max\limits_{i} \; \overline \sigma( \widetilde{\bf Y}_{{\rm C},i}(j\omega) ) \right] \overline \sigma( \widetilde{\bf Y}_{\rm grid}^{-1}(j\omega) ) \\
= \; & \left[ \max\limits_{i} \; \overline \sigma( \widetilde{\bf Y}_{{\rm C},i}(j\omega) ) \right] \frac{1}{\underline \sigma( \widetilde{\bf Y}_{\rm grid}(j\omega) )} <1,
\end{split}
\end{equation*}
which further leads to~\eqref{eq:decentralized_gain}.

For the phase condition in ii) of Lemma~\ref{small-gain-phase}, we need $e^{J {\vartheta}} \; {\bf Y}_{\rm C}^N(j\omega) \; e^{-J {\vartheta}} ({\bf D} \otimes {\widetilde F}_\epsilon^{-1}(j\omega))$ and $\widetilde{\bf Y}_{\rm grid}^{-1}(j\omega)$ to be sectorial at the given frequency $\omega\in [0,\infty]$, while satisfying
\begin{equation}\label{eq:phase1}
\overline \phi(e^{J {\vartheta}} \; {\bf Y}_{\rm C}^N(j\omega) \; e^{-J {\vartheta}} ({\bf D} \otimes {\widetilde F}_\epsilon^{-1}(j\omega))) +  \overline \phi( \widetilde{\bf Y}_{\rm grid}^{-1}(j\omega) ) < \pi, \vspace{-1mm}
\end{equation}
and
\vspace{-1mm}
\begin{equation}\label{eq:phase2}
\underline \phi(e^{J {\vartheta}} \; {\bf Y}_{\rm C}^N(j\omega) \; e^{-J {\vartheta}} ({\bf D} \otimes {\widetilde F}_\epsilon^{-1}(j\omega))) +  \underline \phi( \widetilde{\bf Y}_{\rm grid}^{-1}(j\omega) ) > - \pi.
\end{equation}
Again, since $e^{J {\vartheta}} \; {\bf Y}_{\rm C}^N(j\omega) \; e^{-J {\vartheta}} ({\bf D} \otimes {\widetilde F}_\epsilon^{-1}(j\omega))$ is block-diagonal, we have
\begin{align}\label{eq:max_phase}
&\overline \phi(e^{J {\vartheta}} \; {\bf Y}_{\rm C}^N(j\omega) \; e^{-J {\vartheta}} ({\bf D} \otimes {\widetilde F}_\epsilon^{-1}(j\omega))) \nonumber\\
= \; & \max\limits_{i} \; \overline \phi( D_i e^{J\theta_i} {\bf Y}_{{\rm C},i}(j\omega) e^{-J\theta_i} {\widetilde F}_\epsilon^{-1}(j\omega) )\\
= \; & \max\limits_{i} \; \overline \phi( D_i e^{J\theta_i} {\bf Y}_{{\rm C},i}(j\omega) {\widetilde F}_\epsilon^{-1}(j\omega) e^{-J\theta_i}  )\nonumber\\
= \; & \max\limits_{i} \; \overline \phi( D_i {\bf Y}_{{\rm C},i}(j\omega) {\widetilde F}_\epsilon^{-1}(j\omega) ) 
=  \max\limits_{i} \; \overline \phi( \widetilde{\bf Y}_{{\rm C},i}(j\omega) ),\nonumber
\end{align}
and one can analogously obtain
\begin{equation}\label{eq:min_phase}
\underline \phi(e^{J {\vartheta}} \; {\bf Y}_{\rm C}^N(j\omega) \; e^{-J {\vartheta}} ({\bf D} \otimes {\widetilde F}_\epsilon^{-1}(j\omega))) 
= \min\limits_{i} \; \underline \phi( \widetilde{\bf Y}_{{\rm C},i}(j\omega) ).
\end{equation}
By substituting~\eqref{eq:max_phase} and~\eqref{eq:min_phase} into~\eqref{eq:phase1} and~\eqref{eq:phase2}, we obtain a) and b) in~\eqref{eq:decentralized_phase}. We also need c) in~\eqref{eq:decentralized_phase} to ensure that the block-diagonal matrix $e^{J {\vartheta}} \; {\bf Y}_{\rm C}^N(j\omega) \; e^{-J {\vartheta}} ({\bf D} \otimes {\widetilde F}_\epsilon^{-1}(j\omega))$ remains sectorial. This completes the proof.
\end{proof}

\section{Proof of Corollary~\ref{prop:Power_grid_strength}}
        \label{Append_Proof2}

        \begin{proof}
The result is a special case of Proposition~\ref{prop:Decentralized_stability_conditions}, as shown in what follows.
Since the network dynamics can be represented by~\eqref{eq:Ygrid_RX}, by choosing ${\bf D} = I_N$, we have
\begin{equation}\label{eq:const_Ygrid}
\begin{split}
\widetilde{\bf Y}_{\rm grid}(s) & = ({\bf S}^{-\frac{1}{2}} \otimes I_2) {\bf Y}_{\rm grid}(s) ( {\bf S}^{-\frac{1}{2}} {\bf D} \otimes {F}_\epsilon^{-1}(s)) \\
& = {\bf S}^{-\frac{1}{2}} {\bf B}_{\rm r} {\bf S}^{-\frac{1}{2}} \otimes I_2 .\\
\end{split}
\end{equation}
Further, we have $\underline \sigma(\widetilde{\bf Y}_{\rm grid}(j\omega)) = \underline \sigma({\bf S}^{-\frac{1}{2}} {\bf B}_{\rm r} {\bf S}^{-\frac{1}{2}})$. Notice that ${\bf S}^{-\frac{1}{2}} {\bf B}_{\rm r} {\bf S}^{-\frac{1}{2}}$ is a positive definite matrix and similar to ${\bf S}^{-1} {\bf B}_{\rm r} $, so $\underline \sigma({\bf S}^{-\frac{1}{2}} {\bf B}_{\rm r} {\bf S}^{-\frac{1}{2}}) = \lambda_1({\bf S}^{-\frac{1}{2}} {\bf B}_{\rm r} {\bf S}^{-\frac{1}{2}}) = \lambda_1({\bf S}^{-1} {\bf B}_{\rm r})$. As defined in~\cite{zhang2018assessing} and~\cite{dong2018small}, $\lambda_1({\bf S}^{-1} {\bf B}_{\rm r})$ is the generalized short-circuit ratio (gSCR) of a multi-converter system that reflects the power grid strength and also the (weighted) power network connectivity~\cite{huang2022impacts, dorfler2012kron}. Based on the above, we derive that the condition in~\eqref{eq:decentralized_gain} reduces to~\eqref{eq:decentralized_gain_gSCR}. 
According to~\eqref{eq:const_Ygrid}, $\widetilde{\bf Y}_{\rm grid}(j\omega)$ is a (constant) positive definite matrix. Hence, we have $\overline \phi( \widetilde{\bf Y}^{-1}_{\rm grid}(j\omega) = \underline \phi( \widetilde{\bf Y}^{-1}_{\rm grid}(j\omega) = 0$ and obtain~\eqref{eq:decentralized_phase_gSCR} based on the condition in~\eqref{eq:decentralized_phase}. This completes the proof.
\end{proof}

\section{Proof of Lemma~\ref{lem:Equivalent_systems}}
        \label{Append_Proof3}

\begin{proof}
The characteristic polynomial of the systems is 
\begin{equation}\label{eq:char_poly}
{\rm det}( ({\bf S}\otimes I_2)e^{J {\vartheta}} \; {\bf Y}_{\rm C}^N(s) \; e^{-J {\vartheta}}\; {\bf Y}^{-1}_{\rm grid}(s) +I ) = 0,
\end{equation}
which, by using Schur complement, can be rewritten as
\begin{align}\label{eq:char_poly_equil}
& {\rm det}( ({\bf S}\otimes I_2)e^{J {\vartheta}} \; {\bf Y}_{\rm C}^N(s) \; e^{-J {\vartheta}}\; {\bf Y}^{-1}_{\rm grid}(s) +I ) = 0 \nonumber\\
= \; & {\rm det}( ({\bf S}\otimes I_2)e^{J {\vartheta}} \; {\bf Y}_{\rm C}^N(s) \; e^{-J {\vartheta}} + {\bf Y}_{\rm grid}(s) ) \; {\rm det}( {\bf Y}^{-1}_{\rm grid}(s) ) \nonumber\\
= \; & {\rm det}\left( \begin{bmatrix} {\bf Y}_{{\rm C}\{1\}}(s) & \\ & {\bf Y}_{{\rm C}\{2\}}(s) \end{bmatrix} + \begin{bmatrix}
    {\bf Y}_{\rm g}^1(s) & {\bf Y}_{\rm g}^2(s) \vspace{1mm} \\ 
    {\bf Y}_{\rm g}^3(s) & {\bf Y}_{\rm g}^4(s) \\
\end{bmatrix} \right) \nonumber\\
& {\rm det}( {\bf Y}^{-1}_{\rm grid}(s) ) \nonumber\\
= \; & {\rm det}( {\bf Y}_{{\rm C}\{2\}}(s) + {\bf Y}_{\rm g}^4(s) ) \; {\rm det}( {\bf Y}_{{\rm C}\{1\}}(s) + {\bf Y}_{\rm gridC}(s) ) \nonumber\\
& {\rm det}( {\bf Y}^{-1}_{\rm grid}(s) ) \nonumber\\
= \; & {\rm det}( {\bf Y}_{{\rm C}\{2\}}(s) + {\bf Y}_{\rm g}^4(s) ) \; {\rm det}( {\bf Y}_{{\rm C}\{1\}}(s)  {\bf Y}^{-1}_{\rm gridC}(s) +I ) \nonumber\\
& {\rm det}( {\bf Y}_{\rm gridC}(s) ) \; {\rm det}( {\bf Y}^{-1}_{\rm grid}(s) ) \nonumber\\
= \; & {\rm det}( {\bf Y}_{{\rm C}\{1\}}(s)  {\bf Y}^{-1}_{\rm gridC}(s) +I )  \nonumber\\
& {\rm det}\left( \begin{bmatrix} 0 & \\ & {\bf Y}_{{\rm C}\{2\}}(s) \end{bmatrix} {\bf Y}^{-1}_{\rm grid}(s) +I \right).
\end{align}
The above derivation shows that the characteristic polynomial in~\eqref{eq:char_poly} can be rewritten as the product of the characteristic polynomials of subsystem~1 in~\eqref{eq:subsys1} and subsystem~2 in~\eqref{eq:subsys2}. Hence, the original multi-converter system is stable if subsystem~1 in~\eqref{eq:subsys1} and subsystem~2 in~\eqref{eq:subsys2} are stable. 

Moreover, it holds that
\begin{equation*}
\begin{split}
& {\rm det}\left( \begin{bmatrix} 0 & \\ & {\bf Y}_{{\rm C}\{2\}}(s) \end{bmatrix} {\bf Y}^{-1}_{\rm grid}(s) +I \right) \\
= \; & {\rm det}\left( \begin{bmatrix} 0 & \\ & {\bf Y}_{{\rm C}\{2\}}(s) \end{bmatrix} + {\bf Y}_{\rm grid}(s) \right) \; {\rm det}( {\bf Y}^{-1}_{\rm grid}(s) ) \\
= \; & {\rm det}( {\bf Y}_{\rm gridC}(s) ) \; {\rm det}( {\bf Y}_{{\rm C}\{2\}}(s) + {\bf Y}_{\rm g}^4(s) ) \; {\rm det}( {\bf Y}^{-1}_{\rm grid}(s) ) ,
\end{split}
\end{equation*}
which indicates that if subsystem~1 has no unstable poles, then ${\bf Y}_{\rm gridC}(s)$ has no unstable zeros and therefore ${\bf Y}^{-1}_{\rm gridC}(s)$ has no unstable poles.
This completes the proof.
\end{proof}

        \section{Parameters of the Test Systems}
        \label{Appenxix:GFL_parameters}

        The main parameters of the GFL converters are: $L_F = 0.05~{\rm pu}$, $C_F = 0.06~{\rm pu}$, $L_g = 0.15~{\rm pu}$ in Examples~\ref{ex:single_C},~\ref{ex:three_C}, and~\ref{ex:three_C_GFM} (while in Example~\ref{ex:68_bus}, $L_g = 0.2~{\rm pu}$ for GFL converters 5$\sim$11 and $L_g = 0.26~{\rm pu}$ for GFL converters 12 and 13), PI parameters of current control loop: $\{0.3,10\}$, time constant for voltage feedforward: $0.02$, PI parameters of active/reactive power control loop: $\{0.5,40\}$, PLL bandwidth: 40~rad/s (unless otherwise specified).

        The main parameters of the GFM converters are (the control scheme is similar to the one in~\cite{xin2022many}): $L_F = 0.05~{\rm pu}$, $C_F = 0.06~{\rm pu}$, $L_g = 0.15~{\rm pu}$ in Example~\ref{ex:three_C_GFM} (while in Example~\ref{ex:68_bus}, $L_g = 0.12~{\rm pu}$ for GFM converter 1 and $L_g = 0.1~{\rm pu}$ for GFM converters~2$\sim$4), PI parameters of current control loop: $\{0.3,10\}$, time constant for voltage feedforward: $0.02$, PI parameters of AC voltage control loop (with output current feedforward): $\{2,10\}$, inertia and damping coefficients: $\{2,50\}$ in Example~\ref{ex:three_C_GFM} ($\{0.2,50\}$ for GFM converter~1 and $\{4,50\}$ for GFM converters~2$\sim$4 in Example~\ref{ex:68_bus}).

        The main parameters of the SGs are (the notations are the same as those in~\cite{huang2020damping}): $J_{SG} = 10.39~{\rm pu}$, $X_d = 1.81~{\rm pu}$, $X_q = 1.76~{\rm pu}$, $X'_d = 0.3~{\rm pu}$, $X'_q = 0.65~{\rm pu}$, $X''_d = 0.23~{\rm pu}$, $X''_q = 0.25~{\rm pu}$, $T'_d = 8~{\rm pu}$, $T'_q = 1~{\rm pu}$, $T''_d = 0.03~{\rm pu}$, $T''_q = 0.07~{\rm pu}$; Governor parameters (standard IEEEG1 model): $K_{\rm droop} = 8~{\rm pu}$, $T_1 = 0.5$, $T_2 = 1$, $T_3 = 0.6$, $T_4 = 0.6$, $T_5 = 0.5$, $T_6 = 0.8$, $T_7 = 1$, $K = 5$, $K_1 = 0.3$, $K_3 = 0.25$, $K_5 = 0.3$, $K_7 = 0.15$; Fast excitor parameters (standard IEEET1 model): $K_a = 15$, $T_a = 0.05$, $K_f = 0.0057$, $T_f = 0.5$, $T_r = 0.1$.

        The main power network parameters of the modified 68-bus system are as follows. The load profile, shunt capacitors (in the Pi transmission line model), and capacities of the generators are the same as those in~\cite{canizares2016benchmark}. The base value for power is 100~MVA and for frequency it is 50~Hz. Impedances ($Z_{i,j} = R_{i,j}+jX_{i,j}$) of the transmission lines ($\times 10^{-2}~{\rm pu}$): 
        $Z_{1,54} =j0.0905$, $Z_{2,58} =j0.1250$, $Z_{3,62} =j0.1$, $Z_{4,19} =0.0035+j0.071$, $Z_{5,20} =0.0045+j0.09$, $Z_{6,22} =j0.0715$, $Z_{7,23} =0.0025+j0.136$, $Z_{8,25} =0.003+j0.116$, $Z_{9,29} =0.004+j0.078$, $Z_{10,31} =j0.13$, $Z_{11,32} =j0.065$, $Z_{12,36} =j0.0375$, $Z_{13,17} =j0.2475$, $Z_{14,41} =j0.0075$, $Z_{15,42} =j0.0075$, $Z_{16,18} =j0.015$, $Z_{17,36} =0.0025+j0.0225$, $Z_{17,43} =0.0025+j0.138$, $Z_{18,42} =0.002+j0.03$, $Z_{18,49} =0.038+j0.5709$, $Z_{18,50} =0.006+j0.144$, $Z_{19,20} =0.0035+j0.069$, $Z_{19,68} =0.008+j0.0976$, $Z_{21,22} = 0.004+j0.07$, $Z_{21,68} = 0.004+j0.0675$, $Z_{22,23} = 0.003+j0.048$, $Z_{23,24} = 0.011+j0.175$, $Z_{24,68} = 0.0015+j0.0295$, $Z_{25,26} = 0.016+j0.1615$, $Z_{25,54} = 0.035+j0.043$, $Z_{26,27} = 0.007+j0.0735$, $Z_{26,28} = 0.0215+j0.237$, $Z_{26,29} = 0.0285+j0.3125$, $Z_{27,37} = 0.0065+j0.0865$, $Z_{27,53} = 0.16+j1.6$, $Z_{28,29} = 0.007+j0.0755$, $Z_{30,31} = 0.0065+j0.0935$, $Z_{30,32} = 0.012+j0.144$, $Z_{30,53} = 0.004+j0.037$, $Z_{30,61} = 0.0047+j0.0458$, $Z_{31,38} = 0.0055+j0.0735$, $Z_{31,53} = 0.008+j0.0815$, $Z_{32,33} = 0.004+j0.0495$, $Z_{33,34} = 0.0055+j0.0785$, $Z_{33,38} = 0.018+j0.222$, $Z_{34,35} = 0.0005+j0.037$, $Z_{34,36} = 0.0165+j0.0555$, $Z_{35,45} = 0.0035+j0.0875$, $Z_{36,61} = 0.0055+j0.049$, $Z_{37,52} = 0.0035+j0.041$, $Z_{37,68} = 0.0035+j0.0445$, $Z_{38,46} = 0.011+j0.142$, $Z_{39,44} = j0.2055$, $Z_{39,45} = j0.4195$, $Z_{40,41} = 0.03+j0.42$, $Z_{40,48} = 0.01+j0.11$, $Z_{41,42} = 0.02+j0.3$, $Z_{43,44} = 0.0005+j0.0055$, $Z_{44,45} = 0.0125+j0.365$, $Z_{45,51} = 0.002+j0.0525$, $Z_{46,49} = 0.009+j0.137$, $Z_{47,48} = 0.0063+j0.067$, $Z_{47,53} = 0.0065+j0.094$, $Z_{50,51} = 0.0045+j0.1105$, $Z_{52,55} = 0.0055+j0.0665$, $Z_{53,54} = 0.0175+j0.2055$, $Z_{54,55} = 0.0065+j0.0755$, $Z_{55,56} = 0.0065+j0.1065$, $Z_{56,57} = 0.004+j0.064$, $Z_{56,66} = 0.004+j0.0645$, $Z_{57,58} = 0.001+j0.013$, $Z_{57,60} = 0.004+j0.056$, $Z_{58,59} = 0.003+j0.046$, $Z_{58,63} = 0.0035+j0.041$, $Z_{59,60} = 0.002+j0.023$, $Z_{60,61} = 0.0115+j0.1815$, $Z_{62,63} = 0.002+j0.0215$, $Z_{62,65} = 0.002+j0.0215$, $Z_{63,64} = 0.008+j0.2175$, $Z_{64,65} = 0.008+j0.2175$, $Z_{65,66} = 0.0045+j0.0505$, $Z_{66,67} = 0.009+j0.1085$, $Z_{67,68} = 0.0045+j0.047$.


\end{document}